\newcommand{\phil}[1]{\todo[backgroundcolor=green!25]{#1}}
\newtheorem{lemma}{Lemma}
\newtheorem{theorem}[lemma]{Theorem}
\newtheorem{definition}[lemma]{Definition}
\newtheorem{remark}[lemma]{Remark}
\newcommand{\bigO}{\smash{\ensuremath{O}}}
\newcommand{\tilO}{\smash{\ensuremath{\widetilde{O}}}}
\newcommand{\tilOm}{\smash{\ensuremath{\widetilde{\Omega}}}}
\newcommand{\tilT}{\smash{\ensuremath{\widetilde{\Theta}}}}
\newcommand{\hybrid}{\ensuremath{\mathsf{HYBRID}}\xspace}
\newcommand{\HYBRID}{\ensuremath{\mathsf{HYBRID}}\xspace}
\newcommand{\hybridpar}[2]{\ensuremath{\mathsf{HYBRID}(#1,#2)}}
\newcommand{\hybridparbig}[2]{\ensuremath{\mathsf{HYBRID}\big(#1,#2\big)}}
\newcommand{\LOCAL}{\ensuremath{\mathsf{LOCAL}}\xspace}
\newcommand{\CONGEST}{\ensuremath{\mathsf{CONGEST}}\xspace}
\newcommand{\NCC}{\ensuremath{\mathsf{NCC}}\xspace}
\newcommand{\CC}{\ensuremath{\mathsf{CLIQUE}}\xspace}
\newcommand{\E}{\mathbb{E}}
\newcommand{\eps}{\varepsilon}
\newcommand{\calA}{\mathcal{A}}
\newcommand{\calB}{\mathcal{B}}
\newcommand{\calG}{\mathcal{G}}
\newcommand{\calH}{\mathcal{H}}
\newcommand{\calS}{\mathcal{S}}
\newcommand{\calT}{\mathcal{T}}
\newcommand*{\medcup}{\mathbin{\scalebox{1.5}{\ensuremath{\cup}}}}
\newcommand{\NQ}{\mathcal{NQ}}
\newcommand{\NQpar}{\ensuremath{\mathcal{NQ}(G,k,\gamma)}\xspace}
\newcommand{\p}{\ensuremath{\!+\!}}
\newcommand{\m}{\ensuremath{\!-\!}}
\DeclareMathOperator{\polylog}{polylog}
\DeclareMathOperator{\hop}{hop}
\DeclareMathOperator*{\argmin}{arg\,min}
\DeclareMathOperator*{\argmax}{arg\,max}
\begin{document}

\title{Towards Universally Optimal Shortest Paths Algorithms in the Hybrid Model}
\date{}
\author{Philipp Schneider, University of Bern, Switzerland\\ (philipp.schneider2@unibe.ch)}
\maketitle

\begin{abstract}
	
A drawback of the classic approach for complexity analysis of distributed graph problems is that it mostly informs about the complexity of notorious classes of ``worst case'' graphs. Algorithms that are used to prove a tight (existential) bound are essentially optimized to perform well on such worst case graphs. However, such graphs are often either unlikely or actively avoided in practice, where benign graph instances usually admit much faster solutions. 
	
To circumnavigate these drawbacks, the concept of \textit{universal} complexity analysis in the distributed setting was suggested by [Kutten and Peleg, PODC'95] and actively pursued by [Haeupler et al., STOC'21]. Here, the aim is to gauge the complexity of a distributed graph problem \textit{depending} on the given graph instance. The challenge is to identify and understand the graph property that allows to accurately quantify the complexity of a distributed problem on a given graph.  
\phil{put this sentence into the intro in the submission (commented out)} 

In the present work, we consider distributed shortest paths problems in the \hybrid model of distributed computing, where nodes have simultaneous access to two different modes of communication: one is restricted by \textit{locality} and the other is restricted by \textit{congestion}. We identify the graph parameter of \textit{neighborhood quality} and show that it accurately describes a universal bound for the complexity of certain class of shortest paths problems in the \hybrid model.
\end{abstract}

\section{Introduction}

\textit{Remark: This is a preprint article that focuses on technical contributions in order to facilitate fast scientific exchange. In particular, this version of the article comes without a dedicated account of related work or an extensive introduction, which will be added shortly in an updated version.}

\subsection{The \hybrid model}

For the formal definition of the \hybrid model, we rely on  the concept of \textit{synchronous message passing} \cite{Lynch1996}, where nodes exchange messages and conduct local computations in synchronous rounds. Note that synchronous message passing focuses on the \textit{round complexity}, i.e., the number of rounds required to solve a distributed problem, and therefore nodes are considered computationally unbounded.

\begin{definition}[Synchronous Message Passing, cf.\ \cite{Lynch1996}] 
	\label{def:sync_msg_passing}
	Let $V$ be a set of $n$ nodes with unique identifiers ID\smash{$:V \!\to\! [n] \!\stackrel{\text{def}}{=}\! \{1, \ldots , n\}$}. Time is slotted into discrete rounds. Nodes wake up synchronously and start executing an algorithm $\calA$, which determines each nodes behavior in each round consisting of the following steps. First, all nodes receive the set of messages addressed to them in the last round. Second, nodes conduct computations based on their current state and the set of received messages to compute their new state (randomized algorithms also include the result of a random function). Third, based on the new state, the next messages are sent.
\end{definition}

The aim of the  \hybrid model is to reflect the fundamental concepts of \textit{locality} and \textit{congestion} to capture the nature of distributed systems that combine {both} {physical} and {logical} networks.
%
%

\begin{definition}[cf. \cite{Augustine2019}]
	\label{def:hybrid}
	The \hybridpar{\lambda}{\gamma} model is a synchronous message passing model (Def.\ \ref{def:sync_msg_passing}), subject to the following restrictions. \emph{Local mode:} nodes may send a message per round of maximum size $\lambda$ bits to each of their neighbors in a connected graph. \emph{Global mode:} nodes can send and receive messages of total size at most $\gamma$ bits per round to/from any other node(s) in the network. If these restrictions are violated a strong adversary\footnote{The strong adversary knows the states of all nodes, their source codes and even the outcome of all random functions.} selects the messages that are delivered.
\end{definition}

The parameter $\lambda$ restricts the bandwidth over {edges} in the local network, and $\gamma$ restricts the amount of global communication of {nodes}. Notably, the classical models of distributed computing are covered by this model as marginal cases: \LOCAL and \CONGEST are given by $\gamma = 0$ and $\lambda = \infty$ and $\lambda \in \bigO(\log n)$, respectively. The \CC and \NCC models are given by $\lambda = 0$ and $\gamma \in \bigO(n \log n)$ (due Lenzens  routing algorithm \cite{Lenzen2013})  and $\gamma \in \bigO(\log^2 n)$, respectively.

Of particular practical and theoretical interest are non-marginal parameterizations of \hybridpar{\lambda}{\gamma} that push both communication modes to one extreme end of the spectrum. More specifically, to model the high bandwidth of physical connections we leave the size of local messages unrestricted. To model the severely restricted global bandwidth of shared logical networks, we allow only $\polylog n$ bits of global communication per node per round. Formally, we define the ``standard'' hybrid model as combination of the standard \LOCAL \cite{Peleg2000} and node capacitated clique \cite{Augustine2019} models: $\hybrid := \hybridparbig{\infty}{\tilO(1)}$.

\subsection{Preliminaries}
\label{sec:preliminaries}

We continue with some definitions, conventions and nomenclature that we will use in the following.

\begin{definition}[$(k,\ell)$-{Shortest Paths} ($(k,\ell)$-SP) Problem]
	\label{def:kSSP}
	We are a given subsets of $V$ of $k$ source and $\ell$ target nodes (not necessarily disjoint) in a graph $G=(V,E)$. Every target $t$ has to learn $d_G(s,t)$ for all sources $s$. In the $\alpha$-approximate version of the problem for \emph{stretch} $\alpha \geq 1$, every target node $t$ has to learn values $\tilde{d}(s,t)$ such that $d(s,t)\leq \tilde{d}(s,t)\leq \alpha d(s,t)$ for all source nodes $s$.
\end{definition}

Given that $\ell = n$ we talk about the $k$-sources shortest paths problem ($k$-SSP). Further special cases are the \textit{all-pairs shortest paths problem} (APSP) for $k=\ell=n$ and the single-source shortest paths problem (SSSP) $\ell = n, k=1$. Note that the local communication graph and the input graph for the graph problem are the same, which is a standard assumption for distributed models with graph-based communication (like \LOCAL and \CONGEST).

%

Our algorithms are randomized, i.e., they are supposed successfully compute the solution of a problem with probability $p>0$ for any problem instance. We aim for success \textit{with high probability} (w.h.p.), which means $p \geq 1-\frac{1}{n^c}$ for an arbitrary constant $c>0$. We write i.i.d.\ if we pick elements from some set \textit{independently, identically distributed}. 

In this work, logarithm functions without subscript are generally to the base of two, i.e., \smash{$\log \stackrel{\text{def}}{=} \log_2$}. Sometimes we write $\polylog n$ to describe terms of the form $q(\log n)$ where $q$ is a polynomial. We abbreviate sets of the form $\{1, \dots, k\},  k \in \mathbb{N}$ with $[k]$. We will often neglect logarithmic factors in $n$ using the soft $\tilO$-notation.

We consider undirected, connected communication graphs $G = (V,E)$. Edges have {weights} $w: E \to [W]$, where $W$ is at most polynomial in $n$, thus the weight of an edge and of a simple path fits into a $\bigO(\log n)$ bit message. A graph is considered {unweighted} if $W=1$. Let $w(P) = \sum_{e \in P}w(e)$ denote the length of a path $P \subseteq E$. 

Then the \emph{distance} between two nodes $u,v \in V$ is
{$
	d_G(u,v) := \!\min_{\text{$u$-$v$-path } P} w(P).
	$}
A path with smallest length between two nodes is called a \emph{shortest path}.
Let $|P|$ be the number of edges (or \emph{hops}) of a path $P$.
We define the \emph{$h$-hop limited distance} from $u$ to $v$ as
$
d_{G,h}(u,v) := \!\!\min_{{\text{$u$-$v$-path } P, |P| \leq h }}\, w(P).
$
If there is no $u$-$v$ path $P$ with $|P|\leq h$ we define $d_{G,h}(u,v) := \infty$.

The \emph{hop-distance} between two nodes $u$ and $v$ is defined as
{$
	\hop_G(u,v) := \!\min_{\text{$u$-$v$-path } P} |P|.
	$ }
We generalize this for sets $U,W \subseteq V$
{$
	\hop_G(U,W) := \!\min_{u \in U, w \in W} \hop_G(u,w)
	$}  (whereas $\hop_G(v,v) := 0$).
The \emph{diameter} of $G$ is defined as
{$
	D_G:= \max_{u,v \in V} \hop_G(u,v).
	$}
For $v\in V, r\in \mathbb N$ we define the $h$-\textit{hop neighborhood} (or $h$-hop \textit{ball}) of $v$:
$\calB_G(v,h) := \{u \in V \mid \hop(u,v) \leq h\}.$
We generalize this for node sets $V' \subseteq V$, as well:
$\calB_G(V',h) := \medcup_{v \in V'} \calB_G(v,h).$
We drop the subscript $G$, whenever $G$ is clear from the context.

\subsection{Universal Optimality}

Our concept of universal optimality adheres closely to the one by \cite{Haeupler2021} which bases itself on a description by \cite{Kutten1995}. Consider a graph problem $\Pi$. Then a problem instance of $(G,I)\in \Pi$ consists of a graph $G$ together with a (distributed) problem input $I$.
In the case of the $(k, \ell)$-SP problem, $I$ assigns each node an ID in $[n]$, the IDs of its neighbors as well as the weights of the corresponding edges. The set of source and target nodes is also considered part of $I$, i.e., each node knows whether it is a source or target (or both) but has initially no such information about others.

For the purpose of comparison (see also \cite{Haeupler2021}), we start by defining an even stronger concept, where an algorithm would be called \textit{instance optimal} if it would be competitive with any algorithm optimized for a certain graph $G$ \textit{and} instance $I$.

\begin{definition}[Instance Optimality, see e.g. \cite{Kutten1995, Haeupler2021}]
	\label{def:inst_opt}
	Let $\calA$ be an algorithm that correctly computes the solution to some distributed graph problem $\Pi$ with probability at least $p>0$ in some computational model $\mathcal M$ and takes $T_{\calA}(G,I)$ rounds for $(G,I) \in \Pi$. Then $\calA$ is called an \emph{instance optimal} model $\mathcal M$ algorithm with competitiveness $C$ (omitted for $C \in \tilO(1)$) if the following holds. For all $(G,I) \in \Pi$ and for all algorithms $\calA'$ that solve $\Pi$ with probability at least $p$, we have
	$$T_{\calA}(G,I) \leq C \cdot  T_{\calA'}(G,I).$$
\end{definition}

Unfortunately, the concept of instance optimality is not very interesting in the \hybrid model unless the problem $\Pi$ admits very fast solutions in general. This is due to fact that in the \hybrid model nodes can detect extremely fast if they live in a specific problem instance $(G,I) \in \Pi$ by using the global network and exploit this knowledge accordingly (compare this to the analogous argument  for \CONGEST by \cite{Haeupler2021}). For an illustration, consider the following algorithm $\calA'$ that is instance-optimized for $(G,I)$.

The instance $(G,I)$ and a solution for that instance (i.e., the output that each node with a given ID makes) is all hard coded into $\calA'$. Then each node checks locally if its ID equals an ID in $(G,I)$ and checks if all the input data it obtained equals the local data assigned to that ID by $I$. If that is the case, it outputs a 1 else a 0. The $n$-wise AND of all outputs of all nodes can be made public knowledge deterministically in just $\bigO(\log n)$ rounds \cite{Augustine2019}. If the result is 1, then all nodes output the hard coded result. Else all nodes together run the trivial $O(D_G)$ round \LOCAL algorithm to solve the problem instance.

So unless a graph problem $\Pi$ has $\tilO(1)$ complexity in \hybrid in general, instance optimality with competitiveness $\tilO(1)$ is unattainable for an algorithm that is \textit{oblivious} to $(G,I)$. For $k$-SSP there is in fact a polynomial (existential) lower bound of \smash{$\tilOm\big(\!\sqrt k\big)$} in \hybrid due to \cite{Kuhn2022}, so unless $k \in \tilO(1)$ there can be no instance optimal solution with competitiveness $\tilO(1)$. 
A more fruitful concept is to try to design algorithms that are competitive with the best algorithms that ``know'' the graph $G$ but not $I \in \Pi_G := \{I \mid (G,I) \in \Pi \}$. 
Such an algorithm is called \textit{universally optimal}, in the sense that it can ``adapt'' to the topology of $G$. 
Formally, we define this as follows.

\begin{definition}[Universal Optimality, see \cite{Kutten1995, Haeupler2021}]
	\label{def:uni_opt}
	Let $\calA$ be an algorithm that correctly computes the solution to some distributed graph problem $\Pi$ with probability at least $p>0$ in some computational model $\mathcal M$ and takes $T_{\calA}(G,I)$ rounds for $(G,I) \in \Pi$. Then $\calA$ is called a \emph{universally optimal} model $\mathcal M$ algorithm with competitiveness $C$ (omitted for $C \in \tilO(1)$) if the following holds. For all graphs $G$ and for all algorithms $\calA'$ that solve $\Pi$ with probability at least $p$, we have $$\max\limits_{I \in \Pi_G}T_{\calA}(G,I) \leq C \cdot \max\limits_{I \in \Pi_G} T_{\calA'}(G,I).$$
\end{definition}

\subsection{Contributions}

In this work, we give algorithms for shortest path problems with $k$ source nodes and $\ell$ target nodes ($(k, \ell)$-SP, see Definition \ref{def:kSSP}) in the \hybridpar{\infty}{\gamma} model that are universally optimal for certain ranges of $k$ and $\ell$.
For this purpose, in Section \ref{sec:neigh_qual}, we introduce a graph parameter $\NQpar$ called \textit{neighborhood quality} that, roughly speaking, describes the minimum number of nodes that any node has within a certain neighborhood. The parameter $\NQpar$ depends on $G$, the number of sources $k$ and the global capacity $\gamma$, takes values of at most $\bigO\big( \!\sqrt n\big)$ (see Lemma \ref{lem:sqrt_k_upper_bound}). 

In Section \ref{sec:upper_bound} we show that there exists an algorithm that solves the $(k,\ell)$-SP problem in $\tilO\big(\NQpar\big)$ rounds with stretch $1\p \eps$ for $\ell \in \tilO(1)$ and stretch $3 \p \eps$ for $\ell \leq \NQpar$ i.i.d.\ random target nodes (where $\eps >0$ is an arbitrary constant), see Theorem \ref{thm:uni_upper_bound}.
In Section \ref{sec:lower_bound} we match the upper bound for deterministic target nodes by giving a lower bound which shows that any polynomial approximation of the $(k,1)$-SP problem takes $\tilO\big(\NQpar\big)$ rounds even if all nodes are aware of $G$, see Theorem \ref{thm:universal_lower_bound}. From Theorems \ref{thm:uni_upper_bound}  and \ref{thm:universal_lower_bound} we can deduce the following 

\begin{theorem}
	There exists a \textit{universally optimal} \hybridpar{\infty}{\gamma} model algorithm that solves the $(k,\ell)$-SP problem in $\tilO\big(\NQpar\big)$ with stretch $1 \p \eps$ for $\ell \in \tilO(1)$ w.h.p.
\end{theorem}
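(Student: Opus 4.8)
The plan is to deduce the statement directly from the existential upper bound (Theorem~\ref{thm:uni_upper_bound}) and the matching universal lower bound (Theorem~\ref{thm:universal_lower_bound}), by verifying that the quantifiers in the definition of universal optimality (Definition~\ref{def:uni_opt}) are met and that the gap between the two bounds is merely polylogarithmic. Concretely, I would fix $\Pi$ to be the $(k,\ell)$-SP problem restricted to instances with $\ell \in \tilO(1)$ target nodes and stretch $1\p\eps$, and let $\calA$ be the algorithm guaranteed by Theorem~\ref{thm:uni_upper_bound}. Since that theorem asserts a running time of $\tilO\big(\NQpar\big)$ for \emph{every} instance with $\ell \in \tilO(1)$, I can bound the left-hand side of Definition~\ref{def:uni_opt} uniformly as $\max_{I \in \Pi_G} T_{\calA}(G,I) \in \tilO\big(\NQpar\big)$ for every graph $G$.

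For the right-hand side, I would take an arbitrary competitor $\calA'$ solving $\Pi$ with success probability at least $p$ and invoke Theorem~\ref{thm:universal_lower_bound}. Three observations make this applicable: (i) the hard instance produced by the lower bound uses a single deterministic target ($\ell=1$), hence it is a legal member of $\Pi_G$; (ii) the lower bound holds for \emph{any} polynomial-stretch approximation and therefore in particular for the $1\p\eps$ regime of $\calA'$; and (iii) crucially, it holds even when all nodes already know $G$, which is exactly the power afforded to competitors in the universal setting, where $G$ is fixed and only $I$ is unknown. Consequently $\max_{I \in \Pi_G} T_{\calA'}(G,I)$ is at least the time $\calA'$ spends on this hard instance, namely $\tilOm\big(\NQpar\big)$.

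Finally, I would combine the two estimates by taking the ratio: $\max_{I} T_{\calA}(G,I) \,/\, \max_{I} T_{\calA'}(G,I) \in \tilO\big(\NQpar\big)\,/\,\tilOm\big(\NQpar\big) = \tilO(1)$, since the $\NQpar$ factors cancel and only polylogarithmic terms survive. As the competitiveness lies in $\tilO(1)$, it is omitted per the convention of Definition~\ref{def:uni_opt}, and $\calA$ is universally optimal.

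I expect the only genuine subtlety — rather than the arithmetic — to be the careful alignment of quantifiers. One must confirm that the single-target lower-bound instance is admitted by the very family $\Pi_G$ over which the upper bound's worst case is taken, that the lower bound is stated for the same stretch and success probability $p$ used by $\calA'$, and that both bounds are expressed through the \emph{same} parameter $\NQpar$ for the same $G$, $k$, and $\gamma$, so that the cancellation compares matching quantities rather than distinct ones.
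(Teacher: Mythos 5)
Your proposal is correct and follows essentially the same route as the paper: upper bound from Theorem~\ref{thm:uni_upper_bound}, lower bound from Theorem~\ref{thm:universal_lower_bound} applied to an arbitrary competitor (noting that it holds even when $G$ is known and for any polynomial stretch, hence for $1\p\eps$), and the sandwich argument showing the ratio is $\tilO(1)$. The only cosmetic difference is that the paper packages the lower-bound step via an auxiliary algorithm $\hat{\calA}$ that selects the best competitor for $G$, whereas you apply the lower bound to each $\calA'$ directly; the two are logically equivalent, and your explicit quantifier checks (single-target instance lying in $\Pi_G$, matching stretch and success probability) are exactly the points the paper relies on implicitly.
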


\begin{proof}
	Let $\Pi$ be the problem of  solving $(k,1)$-SP with polynomial stretch.
	By Theorem \ref{thm:uni_upper_bound}, there exists an algorithm $\calA$ that solves $\Pi$ w.h.p.\ in $T_{\calA}(G,I) \in \tilO\big(\NQpar\big)$ rounds for any $(G,I) \in \Pi$.	
	Conversely, we show that  any algorithm $\calA'$  takes at least $\tilOm\big(\NQpar\big)$ rounds to solve $\Pi$ on a given graph $G$ with constant probability. 
	
	Assume that all nodes know $G$ and let $\hat \calA$ be the algorithm that locally selects\footnote{Each node enumerates all $I \in \Pi_G$ and algorithms $\calA'$ and determines $\calA'$ that minimizes $\max_{I \in \Pi_G}T_{\calA'}(G,I)$ using unlimited local computation (cf.\ Def.\ \ref{def:hybrid}). This argument is only for the non-constructive lower bound and no such computations are necessary for the algorithmic upper bound.} the algorithm $\calA'$ that minimizes $\max_{I \in \Pi_G}T_{\calA'}(G,I)$ and then executes $\calA'$. By design, $\hat \calA$  takes $\min_{\calA'} \max_{I \in \Pi_G}T_{\calA'}(G,I)$ rounds. However, by Theorem \ref{thm:universal_lower_bound}, we also know that $\hat \calA$ must still take at least $\tilOm\big(\NQpar\big)$ rounds. Putting it all together we obtain
	\[
		\tilOm\big(\NQpar\big) \ni T_{\hat \calA}(G,I)  = \min_{\calA'} \max_{I \in \Pi_G}T_{\calA'}(G,I) \leq 	T_{\calA}(G,I) \in \tilO\big(\NQpar\big).
	\]
	Therefore, the round complexity of any algorithm $\calA'$ (in particular those optimized for $G$) and the of algorithm $\calA$ are sandwiched between terms $\tilT\big(\NQpar\big)$, i.e., they differ only by a factor $C \in \tilO(1)$, thus $\calA$ is universally optimal (Definition \ref{def:uni_opt}). Note that the upper and lower bound of $\tilT\big(\NQpar\big)$ holds for any $\ell \in \tilO(1)$.
\end{proof}

We conclude that the $(k,1)$-SP problem has a universal bound  $\tilT\big(\NQpar\big)$ on any graph instance, which we consider as a first step towards universally optimal shortest paths algorithms for more general cases. We point out that our results go somewhat beyond this, in the sense that clearly the lower bound of $\tilOm\big(\NQpar\big)$ holds for the general $k$-SSP problem as well. At the same time the upper bound $\tilOm\big(\NQpar\big)$ holds for a larger number of targets $\ell \leq \NQpar$ in case they are selected randomly. The interesting open question is whether there is a matching lower bound for random sources or a matching upper bound for a larger number of fixed sources, which would show the universality of the bound $\tilT\big(\NQpar\big)$ for a broader spectrum of the $(k,\ell)$-SP problem.\phil{mention routing as technical contribution}

\section{The Graph Parameter Neighborhood Quality}
\label{sec:neigh_qual}

We start by giving a fundamental graph parameter that describes the complexity of algorithms solving the $(k,\ell)$ shortest paths problem  ($(k,\ell)$-SP) where all nodes obtain the graph $G$ as part of their input. Leaning on the nomenclature used by previous work (see, e.g., \cite{Haeupler2021}), we call this the neighborhood quality $\NQpar$. Intuitively, $\NQpar$ describes how large the neighborhood of each node within a certain distance is that such a node can utilize on in order to communicate globally with others. This distance depends on $k$ and $\gamma$, where $k$ roughly reflects the `ìnformation'' (measured in Shannon entropy \cite{Shannon1948}) each node has to learn and $\gamma$ restricts the information a single node can receive over large distances per round.

To reflect the runtime of shortest paths algorithms in $G$, the parameter $\NQpar$ is defined inversely. Intuitively, neighborhoods of ``higher quality'' imply that  $\NQpar$ is smaller. Since the trivial solution of $D_G$ rounds is always possible using the local network even if nodes are required to learn huge amounts of information, $\NQpar$ has to be, in effect, upper bounded by $D_G$, which roughly means that neighborhoods play a bigger role on graphs with large diameter (on which global problems become interesting). 

\begin{definition}
	\label{def:neigh_qual}
	Let $G=(V,E)$ be a local graph and $k \in [n]$. For $v \in V$ let $N(d,v) = |\calB(v,d)|$ be the size of the $d$-hop neighborhood  of $v$. Then we define the smallest neighborhood in distance $d \in [D_G]$ as $N(d) := \min_{v \in V} N(d,v)$.
	We define the neighborhood quality in the \hybridpar{\infty}{\gamma} with respect to some $k \in [n]$ as
	\[
		\NQ(G,k,\gamma) := \min\limits_{d \in [D_G]} \max \Big(\tfrac{k}{N(d)\gamma}, d\Big).
	\]
	An equivalent definition that is slightly longer but sometimes useful for explanation is
	\[
	\NQ(G,k,\gamma) := \max_{v \in V}\min\limits_{d \in [D_G]} \max \Big(\tfrac{k}{N(d,v)\gamma}, d\Big).
	\]
\end{definition}

An important property of $\NQ(G,k,\gamma)$ is that it roughly strikes a balance between radius and size of the neighborhood of any node within that radius. This is reflected in the following technical lemma, which we will need later on.

\begin{lemma}
	\label{lem:neigh_qual_min_neigh_size}
	Let $G=(V,E)$ be a graph and $k \in[n]$. Let $d'$ be such that the outer minimum of $\NQ(G,k,\gamma)$ in Def.\ \ref{def:neigh_qual} is minimized and presume $d' < D_G$. Then we have
	\begin{enumerate}[(a)]
		\item $d' \p 1\geq k/N(d')\gamma$ \quad (equivalently: $N(d')\geq k/(d'\p 1)\gamma$)
		\item $d' \leq \NQ(G,k,\gamma) \leq d'\p 1 \leq D_G$ \quad (implies: $N(d')\geq k/(\NQpar\p 1)\gamma$)
	\end{enumerate}
\end{lemma}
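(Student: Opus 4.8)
The plan is to treat $\NQpar$ as the minimum over $d \in [D_G]$ of the function $f(d) := \max\big(k/(N(d)\gamma),\, d\big)$, and to exploit that this is a pointwise maximum of a non-increasing and an increasing function. First I would record the monotonicity: since $\calB(v,d) \subseteq \calB(v,d\p 1)$ for every $v$, each $N(d,v)$ is non-decreasing in $d$, hence so is $N(d) = \min_v N(d,v)$; consequently $g(d) := k/(N(d)\gamma)$ is non-increasing in $d$, while the second argument $d$ is strictly increasing. Thus $f$ is the upper envelope of a decreasing and an increasing term, so its minimum is attained near the point where the two terms balance, which is essentially the content of (a).

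For part (a) I would take $d'$ to be the \emph{largest} index in $[D_G]$ at which the minimum $f(d')=\NQpar$ is attained; the hypothesis $d' < D_G$ then guarantees both that $d'\p 1 \in [D_G]$ and that $f(d'\p 1) > f(d')$ \emph{strictly} (every index above the largest minimizer has strictly larger $f$-value). Suppose for contradiction that $g(d') > d'\p 1$. Then $g(d') > d'$ forces $f(d') = g(d')$, and I would split on the value of $g(d'\p 1)$: if $g(d'\p 1) \leq d'\p 1$ then $f(d'\p 1) = d'\p 1 < g(d') = f(d')$; otherwise $f(d'\p 1) = g(d'\p 1) \leq g(d') = f(d')$ by monotonicity. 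Either way $f(d'\p 1) \leq f(d')$, contradicting strictness. Hence $g(d') \leq d'\p 1$, i.e.\ $d'\p 1 \geq k/(N(d')\gamma)$, which rearranges to $N(d') \geq k/(d'\p 1)\gamma$.

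Part (b) then follows mechanically from (a): the lower bound $\NQpar = \max(g(d'),d') \geq d'$ is immediate; the upper bound uses $g(d') \leq d'\p 1$ from (a) together with $d' \leq d'\p 1$ to get $\NQpar = \max(g(d'),d') \leq d'\p 1$; and $d'\p 1 \leq D_G$ is just integrality of the hypothesis $d' < D_G$. For the parenthetical consequence $N(d') \geq k/(\NQpar\p 1)\gamma$, I would combine the rearranged form of (a), $N(d') \geq k/(d'\p 1)\gamma$, with $\NQpar \geq d'$ (hence $\NQpar\p 1 \geq d'\p 1$), which only weakens the denominator.

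The step I expect to be the main obstacle is the choice of minimizer in (a): for an \emph{arbitrary} minimizer the claim can genuinely fail (if $N(d')=N(d'\p 1)$ while $g(d') > d'\p 1$, then $d'\p 1$ is also a minimizer and (a) is violated at $d'$), so the proof really needs the strict inequality $f(d'\p 1) > f(d')$ that comes from picking the \emph{largest} minimizer. I would therefore make the write-up fix this tie-breaking convention explicitly, and check that the hypothesis $d' < D_G$ is exactly what keeps $d'\p 1$ inside the admissible range $[D_G]$ so that the strictness argument is available.
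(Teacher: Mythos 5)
Your case analysis is internally correct: for the \emph{largest} minimizer $d'$ the strict inequality $f(d'+1)>f(d')$ does rule out $k/(N(d')\gamma)>d'+1$, and part (b) together with the parenthetical consequences then follows exactly as you describe. The overall shape (contradiction via comparing $d'$ with $d'+1$) is also the paper's.

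However, there is a gap between what you prove and what the lemma asserts. The statement reads ``let $d'$ be such that the outer minimum \dots is minimized,'' i.e.\ an \emph{arbitrary} minimizer, and the lemma is later invoked (e.g.\ in the helper-set construction) for an unspecified optimizing $d'$; your proof covers only the largest one, and you explicitly claim the conclusion ``can genuinely fail'' otherwise. That claim is wrong under the paper's standing assumptions: $G$ is connected and $d'<D_G$, so the ball realizing $N(d+1)$ either already covers $V$ (forcing $N(d+1)=n>N(d)$) or acquires at least one new node, hence $N(d+1)\geq N(d)+1$ and $k/(N(d)\gamma)$ is \emph{strictly} decreasing on $[D_G]$. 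The scenario you need for your counterexample, $N(d')=N(d'+1)$ with $k/(N(d')\gamma)>d'+1$, therefore cannot occur. This strict monotonicity is precisely what the paper's proof uses: assuming $d'+1<k/(N(d')\gamma)$ one gets
$\max\big(\tfrac{k}{N(d')\gamma},d'\big)=\max\big(\tfrac{k}{N(d')\gamma},d'+1\big)>\max\big(\tfrac{k}{N(d'+1)\gamma},d'+1\big)$,
a strict decrease that contradicts minimality of \emph{any} minimizer, with no tie-breaking convention required. To repair your write-up, replace the monotonicity observation ``$N(d)$ is non-decreasing'' by the strict version above (justified by connectivity), after which your contradiction argument goes through verbatim for every minimizer and the largest-minimizer convention can be dropped.
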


\begin{proof}
	Note that as $d$ increases by 1, $N(d)$ increases by at least 1, because the neighborhood of each node within radius $d$ gets at least one node larger as $d < D_G$ and $G$ is connected. This implies that \smash{$\tfrac{k}{N(d)\gamma}$} strictly decreases in $d$. Let us assume (for a contradiction) that \smash{$d' \p 1< \tfrac{k}{N(d')\gamma}$}. This implies that 
	$$\max \Big(\tfrac{k}{N(d')\gamma}, d'\Big) = \max \Big(\tfrac{k}{N(d')\gamma}, d' \p 1\Big) > \max \Big(\tfrac{k}{N(d'+1)\gamma}, d' \p 1\Big),$$
	which is a contradiction since $d'$ minimizes the term on the left, thus \smash{$d' \p 1 \geq \tfrac{k}{N(d')\gamma}$}. The claim $d' \leq \max \big(\tfrac{k}{N(d')\gamma}, d'\big) \leq d'+1$ follows immediately.
\end{proof}

The first indication that $\NQ(G,k,\gamma)$ is a suitable parameter to describe a universal bound for $k$-SSP (which we prove in the subsequent sections) is obtained by relating it to the existential lower bound of \smash{$\tilOm\big(\!\sqrt k\big)$} in the standard \hybrid model (i.e., $\gamma \in \tilO(1)$) by \cite{Kuhn2020, Augustine2020a} that describes worst case instances for the $k$-SSP problem.

\begin{lemma}
	\label{lem:sqrt_k_upper_bound}
	Let $G$ be a graph and $k \in [n]$. We have $1 \leq \NQ(G,k,\gamma) \leq  \sqrt{k/\gamma} +1.$	
\end{lemma}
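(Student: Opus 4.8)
The plan is to treat the two bounds separately, with essentially all the work in the upper bound. The lower bound $\NQpar \geq 1$ is immediate from Definition \ref{def:neigh_qual}: the outer minimum ranges over $d \in [D_G]$, and every such $d$ satisfies $d \geq 1$, so the inner maximum $\max(k/(N(d)\gamma), d)$ is at least $1$ for each $d$; hence so is the minimum over $d$. For the upper bound I would exploit that the outer operation is a minimum, so it suffices to exhibit a single radius $d^\ast \in [D_G]$ with $\max\bigl(k/(N(d^\ast)\gamma),\, d^\ast\bigr) \leq \sqrt{k/\gamma}+1$.

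The key structural ingredient is a lower bound on neighborhood sizes: since $G$ is connected, a BFS from any node $v$ gains at least one new node at each level until it exhausts $V$, and together with $D_G \leq n-1$ this yields $N(d,v) \geq \min(d+1,n) = d+1$ for every $v$ and every $d \in [D_G]$, so $N(d) \geq d+1$. (This is the same monotonicity already invoked in the proof of Lemma \ref{lem:neigh_qual_min_neigh_size}.) With this in hand I would balance the two arguments of the maximum by choosing $d^\ast := \min\bigl(\lceil \sqrt{k/\gamma}\rceil,\, D_G\bigr)$, which is a valid index in $[D_G]$ since $k,\gamma \geq 1$ force $\lceil \sqrt{k/\gamma}\rceil \geq 1$.

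The argument then splits according to which term realizes $d^\ast$. If $d^\ast = \lceil \sqrt{k/\gamma}\rceil$, then $N(d^\ast) \geq d^\ast \geq \sqrt{k/\gamma}$ gives $k/(N(d^\ast)\gamma) \leq k/\sqrt{k\gamma} = \sqrt{k/\gamma}$, while $d^\ast \leq \sqrt{k/\gamma}+1$; both arguments of the maximum are thus at most $\sqrt{k/\gamma}+1$. The one subtlety — and the step I expect to need the most care — is the small-diameter case $d^\ast = D_G < \lceil \sqrt{k/\gamma}\rceil$, where $\sqrt{k/\gamma}$ overshoots the available radius range and the bound $N(d)\geq d+1$ no longer suffices. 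Here I would instead use that a ball of radius $D_G$ is all of $V$, so $N(D_G)=n$ and hence $k/(n\gamma) \leq 1 \leq \sqrt{k/\gamma}+1$ (using $k \leq n$ and $\gamma \geq 1$), whereas $D_G < \sqrt{k/\gamma}$; again both terms stay within the claimed bound. Since in either case $\NQpar \leq \max\bigl(k/(N(d^\ast)\gamma),\, d^\ast\bigr) \leq \sqrt{k/\gamma}+1$, this settles the upper bound, and the remaining manipulations are routine algebra.
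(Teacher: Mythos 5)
Your proof is correct and follows essentially the same route as the paper's: both lower-bound $N(d)$ by $d$ (the path/BFS argument) and then balance the two terms of the maximum at $d \approx \sqrt{k/\gamma}$. Your write-up is in fact somewhat more careful than the paper's, which argues informally via ``the worst case is a path'' and does not explicitly treat the small-diameter case $D_G < \lceil\sqrt{k/\gamma}\rceil$ that you handle via $N(D_G)=n$.
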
 
	
\begin{proof}
	The value $\NQ(G,k,\gamma)$ is always at least 1 for the given parameter range of $d$. Looking more closely at Definition \ref{def:neigh_qual} we see that $\NQ(G,k,\gamma)$ is largest if there is a node that has only a very sparse neighborhood $N(d) \in \Theta(d)$. We consider the worst case where $G$ is a path, thus $N(d,v) =d$ for one endpoint $v$ of that path. Assume that $d$ and $N(d,v)$ can attain real values, then \smash{$\min_{d \leq D_G} \max \big(\tfrac{k}{d\gamma}, d\big) $} is minimized if 	$\tfrac{k}{d\gamma} = d$, which implies \smash{$d = \sqrt{k/\gamma}$}.
	For integral $d$ it is \smash{$\NQpar \leq  \max \big(\tfrac{k}{\lceil d\rceil \gamma}, \lceil d\rceil\big) \leq \sqrt{k/\gamma} +1$}.
\end{proof}

The lemma implies that the neighborhood quality is upper bounded by the existential lower bound \smash{$\NQ(G,k,\gamma) \in \tilO\big(\!\sqrt k\big)$} (where $\gamma \in \tilO(1)$ in the standard \hybrid model). The proof also shows that there is in fact a problem instance $G,k$ with a linear neighborhood such that \smash{$\NQ(G,k) \in \tilOm\big(\!\sqrt k\big)$}. In fact, graphs which feature an isolated, long path have frequently been used to obtain existential lower bounds for shortest paths problems in the \hybrid model \cite{Augustine2020a,Kuhn2020, Kuhn2022}.

In the following, we often consider the quality of the neighborhood of a single node $v \in V$, which we define as follows.

\begin{definition}
	\label{def:neigh_qual_node}
	Let $G=(V,E)$ be any fixed graph and consider any fixed $k \in[n]$ for which the $k$-SSP problem needs to be solved on $G$. For $v \in V$ we define  $$\NQ(v) = \min\limits_{d \in [D_G]} \max \Big(\tfrac{k}{N(v,d)\gamma}, d\Big).$$
	Notice that we use the size $N(d,v)$ (cf.\ Definition \ref{def:neigh_qual}) of the neighborhood within $d$ hops of $v$, thus $\NQ(v)$ describes the neighborhood quality of a node $v$. In fact, we have $\NQ(G,k,\gamma) = \max_{v \in V} \NQ(v)$, see Definition \ref{def:neigh_qual}.
	We often use $d_v$ for the value that optimizes the outer minimum in the definition of $\NQ(v)$. In Section \ref{sec:lower_bound} on universal lower bounds, we will often relate to $v := \argmax_{u \in V} \NQ(u)$.
\end{definition}	

Be aware that $\NQ(v)$ also depends on the problem instance $G,k$ and $\gamma$, but we omit to express this specifically for brevity. As a warm up, we show that the parameter $\NQ(G,k,\gamma)$ can be computed efficiently in the \hybrid model, in particular, we give an algorithm that takes roughly $\NQ(G,k,\gamma)$ rounds.

\begin{lemma}
	\label{lem:compute_neigh_qual}
	The parameter $\NQ(G,k,\gamma)$ can be computed and made known to all nodes in the network in $\tilO\big(\NQ(G,k,\gamma)\big)$ rounds in the \hybrid model.	
\end{lemma}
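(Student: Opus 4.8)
The plan is to let every node $v$ compute its own quantity $\NQ(v)$ (Definition \ref{def:neigh_qual_node}) by locally exploring growing hop-balls, and then to aggregate the global maximum $\NQpar = \max_{v\in V}\NQ(v)$ over the global network. Concretely, I would have each node $v$ run a breadth-first exploration in the \emph{local} network: in round $d$ every node forwards to all of its neighbors the set of (ID, hop-distance) pairs it has discovered so far, so that after $d$ rounds $v$ knows the entire ball $\calB(v,d)$ and hence $N(d,v) = |\calB(v,d)|$ exactly. Since the local bandwidth is unbounded ($\lambda = \infty$), a single round suffices to ship an entire ball, so this incurs no congestion penalty. Writing $f_v(d) := \max\!\big(k/(N(d,v)\gamma),\, d\big)$, node $v$ can evaluate $f_v(d)$ as soon as it knows $N(d,v)$, provided it knows the scalars $k$ and $\gamma$. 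The value $\gamma$ is a model constant, and $k$ (the number of sources) can be obtained beforehand by a single global count-and-broadcast in $\tilO(1)$ rounds, so I treat both as known.

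The key ingredient is an early-stopping criterion. As $d$ grows, $d$ increases while $k/(N(d,v)\gamma)$ is non-increasing (the ball gains at least one node per step until it saturates, by connectivity), so $f_v$ is quasiconvex with a single valley, exactly as in the proof of Lemma \ref{lem:neigh_qual_min_neigh_size}. Hence the first depth $d_v^*$ at which $d \ge k/(N(d,v)\gamma)$ satisfies $f_v(d)=d$ for all $d\ge d_v^*$, so node $v$ may stop exploring at $d_v^*$ and output $\NQ(v) = \min_{d\le d_v^*} f_v(d)$, which coincides with the true minimum over all $d\in[D_G]$. Applying the monotonicity argument of Lemma \ref{lem:neigh_qual_min_neigh_size} with $N(d,v)$ in place of $N(d)$ gives $d_v^* \le \NQ(v)+1 \le \NQpar+1$, so every node halts its local exploration within $O(\NQpar)$ rounds.

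The remaining obstacle is that no node knows $\NQpar$ in advance, so the nodes cannot agree on a global stopping time. I would resolve this by geometric guessing: in phase $i$ run the local exploration for $2^i$ rounds, then use the global network to compute the AND of the bits ``have I reached $d_v^*$ within $2^i$ rounds?'', which takes $\tilO(1)$ rounds \cite{Augustine2019}; if some node is not yet done, double the budget and repeat. The correct budget $2^i \ge \NQpar+1$ is reached after $O(\log \NQpar)$ phases, and the total exploration cost is a geometric sum $\sum_i 2^i = O(\NQpar)$. Once the AND returns true, a single global MAX-aggregation of the values $\NQ(v)$ followed by a broadcast of the result---again $\tilO(1)$ rounds over the global network---makes $\NQpar = \max_v \NQ(v)$ known to all nodes. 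Summing up, the local explorations contribute $O(\NQpar)$ rounds and all global aggregations and broadcasts contribute $\tilO(1)$, for a total of $\tilO(\NQpar)$.

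I expect the main difficulty to be handling the unknown horizon cleanly---that is, certifying global termination and aggregating the maximum without ever spending more than $\tilO(\NQpar)$ rounds---since the per-node early-stopping bound $d_v^* \le \NQpar+1$ and the $\tilO(1)$ aggregation primitives are essentially inherited from Lemma \ref{lem:neigh_qual_min_neigh_size} and from known \hybrid aggregation routines, respectively. A minor point to verify is that $\min_{d\le d_v^*} f_v(d)$ genuinely equals the unrestricted minimum; this follows because $f_v(d)=d > f_v(d_v^*)$ for every $d > d_v^*$, so no depth beyond $d_v^*$ can improve the value.
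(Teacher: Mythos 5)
Your proposal is correct and follows essentially the same route as the paper: grow local balls to compute $N(d,v)$, stop once $d \geq k/(N(d,v)\gamma)$ (justified by the monotonicity argument of Lemma \ref{lem:neigh_qual_min_neigh_size}, which bounds the exploration depth by $O(\NQpar)$), and aggregate the maximum over the global network. The only difference is bookkeeping: the paper interleaves a $\tilO(1)$-round global aggregation of $\max_v f_v(d)$ after \emph{every} local round and uses a global stopping test, whereas you stop each node individually and synchronize termination by geometric doubling with a global AND --- both yield $\tilO(\NQpar)$ rounds.
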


\begin{proof}
	The idea is that each node locally computes $\NQ(v)$ and then all nodes globally compute the maximum of those values to obtain $\NQ(G,k,\gamma)$ (see Definition \ref{def:neigh_qual}).  The first step works by each node exploring the local network to increasing depth $d$ and locally computing \smash{$\max \big(\tfrac{k}{N(v,d)\gamma}, d\big)$} which takes $d$ rounds using \LOCAL. 		
	To ensure that we do not look too deep, i.e., beyond \smash{$\bigO\big(\NQ(G,k,\gamma)\big)$}, after each \LOCAL round we compute \smash{$\max_{v \in V} \max \big(\tfrac{k}{N(v,d)\gamma}, d\big) = \max \big(\tfrac{k}{N(d)\gamma}, d\big) $} which takes $\tilO(1)$ rounds using the global network \NCC (see \cite{Augustine2019}).\footnote{Aggregation (e.g., computing of a maximum) of one value per node in $\tilO(1)$ in \NCC is not very hard, in particular if we are allowed to use randomization, \cite{Augustine2019} also offers a deterministic solution.} 	
	
	Since we alternate between a round of \LOCAL and a $\tilO(1)$ round aggregation via \NCC , the overall running time is $\tilO(d')$, where $d'$ is the depth to which we have to explore locally such that $$\max \big(\tfrac{k}{N(d')\gamma}, d'\big) = \min_{d\in[D_G]}\max \big(\tfrac{k}{N(d)\gamma}, d\big) = \NQ(G,k,\gamma),$$ 
	so it remains to quantify $d'$.
	As in our  previous proof we use that $N(d)$ is strictly increasing in $d$, thus \smash{$\tfrac{k}{N(d)\gamma}$} decreases strictly in $d$. Therefore, if after some $(d'+1)$-th round (assuming $d' < D_G$) in \LOCAL and the subsequent aggregation the nodes observe the condition
	$$\max \big(\tfrac{k}{N(d')\gamma}, d'\big) \leq  \max \big(\tfrac{k}{N(d'+1)\gamma}, d'\p1\big)$$
	then this term does not get any smaller for any $d > d'$ anymore (or $d' = D_G$) thus $\NQ(G,k,\gamma) = \max \big(\tfrac{k}{N(d')\gamma}, d'\big)$. Due to Lemma \ref{lem:neigh_qual_min_neigh_size} we have $d'  \in \bigO(\NQ(G,k,\gamma))$.
\end{proof}


\section{Neighborhood Quality - Upper Bound for \boldmath $(k,\!\ell)$-SP}
\label{sec:upper_bound}

In this section we give an approximation algorithm with stretch $(1\p\eps)$ for  the $(k,\ell)$-SP problem for certain parameters of $k,\ell$ that takes $\tilO\big(\NQ(G,k,\gamma)\big)$ rounds for any graph $G=(V,E)$, which is competitive with the best algorithm optimized for $G$ up to $\tilT(1)$ factors (however the latter  will only become apparent in Section \ref{sec:lower_bound}, where we show a corresponding lower bound that even holds for the $(k,1)$-SP problem). In particular, our solution works for $\ell \in \tilO(1)$ arbitrary sources and up to $\ell \leq \NQpar$ i.i.d.\ random sources.

To prove this, we draw on a recent algorithm \cite{Schneider2023, Schneider2023a}, which in turn employs techniques by \cite{Rozhon2022}, to solve the SSSP problem in just $\tilO(1)$ rounds in the \hybrid model. We express their result in the following Lemma.

\begin{lemma}[see \cite{Schneider2023a}]
	\label{lem:k-ssp}
	A $(1 \p \eps)$-approximation of SSSP can be computed in $\tilO(1/\eps^2)$ rounds in the \hybrid model, w.h.p.
	Furthermore, the $\ell$-SSP problem can be solved in $\tilO\big(\!\sqrt {\ell/\gamma}\cdot\tfrac{1}{\eps^2}\big)$ rounds (for $\eps > 0$) w.h.p.\ in the \hybridpar{\infty}{\gamma} model with stretch $3 \p \eps$.
\end{lemma}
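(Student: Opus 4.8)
The statement is imported from \cite{Schneider2023a} (building on the minor-aggregation framework of \cite{Rozhon2022}), so the plan is to sketch the underlying technique rather than reprove it from scratch. The common idea for both claims is to split every distance into a \emph{short-range} component, resolved by direct exploration in the local network, and a \emph{long-range} component, resolved by a sparse \emph{skeleton} of randomly sampled nodes that communicate over the global network. Sampling each node independently with probability $\tilT(1/h)$ produces a skeleton $S$ of expected size $\tilO(n/h)$; a Chernoff-plus-union-bound argument shows that w.h.p.\ every shortest path with more than $h$ hops contains a node of $S$, so that all long-range distances can be expressed as combinations of $h$-hop-limited local distances and skeleton-to-skeleton distances.

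For the first claim I would set the hop parameter to $h = \polylog n$ and compute a $(1\p\eps)$-approximate \emph{hopset} of hopbound $\polylog n$ on the resulting virtual skeleton graph. Once such a hopset is available, $(1\p\eps)$-approximate SSSP reduces to $\polylog n$ rounds of hop-limited Bellman--Ford relaxation. The crucial observation is that each relaxation round (and each aggregation used to build the hopset) moves only $\tilO(1)$ bits per node and can therefore be realised as a single minor-aggregation step, which the global network of \hybrid simulates in $\tilO(1)$ rounds \cite{Rozhon2022, Augustine2019}; the local exploration contributes only the $\polylog n$ rounds, and tracking the $\eps$-dependence of the hopbound gives the stated $\tilO(1/\eps^2)$.

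For the second claim I would keep the same skeleton construction but re-balance it against a source count $\ell$ and a bandwidth budget $\gamma$, trading accuracy for speed. The local phase explores $h$ hops from every node and costs $\tilO(h)$ rounds; the global phase computes approximate distances on the skeleton graph and delivers them to the $\ell$ sources and the targets, and its cost decreases in $h$ (a larger $h$ means a smaller skeleton and hence less global traffic, scaling inversely with $\gamma$). Balancing the increasing local cost against the decreasing global cost yields the optimal choice $h = \tilT\big(\sqrt{\ell/\gamma}\big)$ and overall running time $\tilO\big(\sqrt{\ell/\gamma}\big)$. The stretch $3\p\eps$ arises because a source--target distance is estimated by routing through a \emph{nearest} skeleton node of each endpoint and concatenating the corresponding skeleton distance; the triangle inequality over this detour contributes the factor $3$, while running the $(1\p\eps)$-approximate SSSP routine of the first claim as the skeleton-distance subroutine contributes the residual $\eps$ and the $1/\eps^2$ factor.

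The step I expect to be the main obstacle is the faithful simulation of the hopset/minor-aggregation machinery within the strict $\tilO(1)$-bandwidth global model: one must verify that a single hopset has simultaneously near-linear size and $\polylog n$ hopbound, and that every relaxation and every aggregation used in its construction fits into $\tilO(1)$ global rounds. For the second claim the delicate point is instead the bandwidth accounting of the skeleton computation---confirming that, under per-node capacity $\gamma$, the skeleton distances really can be computed and disseminated within a cost that balances against $\tilO(h)$ at $h = \tilT\big(\sqrt{\ell/\gamma}\big)$---together with the concentration argument guaranteeing that $S$ hits every long shortest path w.h.p.
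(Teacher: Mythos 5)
The paper offers no proof of this lemma at all: it is imported verbatim as a black-box citation of \cite{Schneider2023a} (and, transitively, \cite{Rozhon2022}), so there is nothing internal to compare your argument against. That said, your sketch is a faithful reconstruction of the cited techniques for the \emph{second} claim: the skeleton sampled at rate $\tilT(1/h)$, the hitting-set argument for paths of more than $h$ hops, the bandwidth accounting that forces $R \in \tilOm(\ell/(h\gamma))$ global rounds against $\tilO(h)$ local rounds, the balance at $h = \tilT\big(\sqrt{\ell/\gamma}\big)$, and the stretch-$3$ detour through nearest skeleton nodes are exactly the mechanism in the cited line of work (and consistent with the skeleton-graph machinery the paper itself recalls in its appendix). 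For the \emph{first} claim your framing is slightly off the mark: the $\tilO(1/\eps^2)$-round $(1\p\eps)$-SSSP result does not go through a skeleton of size $n/\polylog n$ plus a hopset on it (building such a hopset in $\tilO(1)$ global rounds would itself be the hard open part); rather, the cited approach directly simulates the minor-aggregation-based SSSP algorithm of \cite{Rozhon2022} on $G$, using the global network to implement each partwise aggregation in $\tilO(1)$ rounds. You do mention this simulation as the ``crucial observation,'' so the gap is one of emphasis rather than substance, but as written the skeleton-plus-hopset route for claim one would not obviously close within the stated round budget.
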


We stress that Lemma \ref{lem:k-ssp} by itself does not give a universally tight solution, since the corresponding parameter $\NQ(G,k,\ell)$ for $\ell$ could be very small, i.e., $\NQ(G,k,\ell) \ll \sqrt{\ell}$ (in case nodes have quite large neighborhoods in relative proximity). However, the result above can be used for a universally tight solution of the $(k,\ell)$-SP problem as follows. 

We approximate the $\ell$-SSP problem where the $\ell \leq \NQpar$ target nodes take the role of source nodes. In particular, by Lemma \ref{lem:k-ssp}, all $k$ source nodes learn their distance to each of the $\ell$ targets. Thus we would be able to solve the $(k,\ell)$-SP problem if the $k$ sources can communicate the corresponding distances to the $\ell$ target nodes.

This corresponds to a message routing problem where all $k$ source nodes have to deliver a message to each of the $\ell$ targets. In Section \ref{sec:adaptive_helper_sets} we show that this message routing problem can be solved in  $\tilO\big(\NQpar\big)$ rounds by gearing some of the techniques of \cite{Kuhn2022} (algorithms for ``token routing'') for the parameter \NQpar by computing a structure of adaptive helper sets (Section \ref{sec:adaptive-helper-sets}).
Here, two bottlenecks that prohibit a larger set of target nodes emerge. 

The first bottleneck is the number $k$ of messages that each target node has to receive. This is where we have to leverage the neighborhood quality $\NQpar$ and the random selection of target nodes. The neighborhood quality gives each target node access to enough neighbors that allows it to effectively receive all those messages. The random selection of target nodes implies that these are well spread over in the local network, thus none has to share its neighbors with too many others w.h.p.

The second bottleneck is that source nodes are arbitrary and can be locally highly concentrated. This means they cannot meaningfully rely on other nodes to help them to send their messages, thus the number of messages they have to send (and thus $\ell$) needs to be bounded. 

\subsection{Adaptive Helper Sets}
\label{sec:adaptive-helper-sets}

To work around those bottlenecks and solve this routing problem we adapt the structure of so called ``helper sets''.  that was first employed in \cite{Kuhn2020} and our definition accommodates the neighborhood quality of a graph $G$.

\begin{definition}[cf.\ \cite{Kuhn2020}]
	\label{def:helpers}
	Let $G = (V,E)$, $k \in [n]$ and let $W \subseteq V$ either be a set of $\ell \leq \NQpar$ i.i.d.\ random nodes or $\tilO(1)$ arbitrary nodes.
	A family $\{H_w \subseteq V \mid w \in W\}$ of \emph{adaptive helper sets} has the following properties. 
	
 	(1) Each $H_w$ has size $\tilT\big(k/\gamma\NQpar\big)$. 
 	
 	(2) For all $u \!\in\! H_w\!: hop(w,u) \!\in\!  \tilO\big(\NQpar\big)$. 
 	
 	(3) Each node is part of at most $\tilO(1)$ sets $H_w$.
\end{definition}

The main difference to the helper sets by \cite{Kuhn2020} is that we obtain a stronger guarantee on their size, by leveraging the graph parameter \NQpar. We prove that we can compute such a helper set in the following lemma.

\begin{lemma}[see \cite{Kuhn2020}]
	\label{lem:helpers}
	A family of adaptive helper sets $\{H_w \subseteq V \mid w \in W \}$ as in Definition \ref{def:helpers} can be computed w.h.p.\ in $\tilO(\NQ(G,k,\gamma))$ rounds in \hybridpar{\infty}{\gamma}.
\end{lemma}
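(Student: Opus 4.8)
The plan is to make the radius parameter explicit and then reduce the construction to a local ball-growing step followed by a load-balanced assignment of helpers to targets. Write $Q := \NQpar$ for brevity. As preprocessing I would run Lemma~\ref{lem:compute_neigh_qual} so that every node learns $Q$ in $\tilO(Q)$ rounds, and then let every node $v$ explore its $Q$-hop ball $\calB(v,Q)$ via the local network, which again costs $\tilO(Q)$ rounds. The point of using radius exactly $Q$ is that, by Definition~\ref{def:neigh_qual_node} and $\NQ(v)\le Q$ for all $v$, the optimizing depth satisfies $d_v\le Q$, and Lemma~\ref{lem:neigh_qual_min_neigh_size} then gives $N(v,Q)\ge N(v,d_v)\ge k/((Q\p1)\gamma)=\tilOm(k/\gamma Q)$. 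Hence every ball already contains enough candidates to supply the target size $\tilT(k/\gamma Q)$, and property~(2) of Definition~\ref{def:helpers} is automatic as long as we only ever pick helpers of $w$ from $\calB(w,Q)$.

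The case $W=\tilO(1)$ of arbitrary nodes is then immediate: each $w\in W$ keeps any $\Theta(k/\gamma Q)$ of the nodes it discovered in $\calB(w,Q)$ (e.g.\ those of smallest ID), giving properties~(1) and~(2); property~(3) holds trivially because there are only $\tilO(1)$ sets in total, so no node can lie in more than $\tilO(1)$ of them. It remains only to inform the chosen helpers, which takes $\tilO(Q)$ further local rounds since each helper is within $Q$ hops of its $w$.

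The interesting case is $W$ consisting of $\ell\le\NQpar$ i.i.d.\ random nodes, where property~(3) is the binding constraint and where the randomness of $W$ must be exploited. Here I would phrase the selection as a load-balancing problem: each target $w$ must be assigned $b:=\tilT(k/\gamma Q)$ helpers drawn from $\calB(w,Q)$, while each node may serve only $\tilO(1)$ targets. The guiding principle is a counting identity: the total demand is $\ell\cdot b\le Q\cdot\tilO(k/\gamma Q)=\tilO(k/\gamma)$, which is $\tilO(n)$ since $k\le n$ and $\gamma\ge1$, so globally there is just enough supply; the role of the random placement is to make this balance hold \emph{locally} as well. Concretely, because the $\ell\le Q$ targets are spread i.i.d., the expected number of targets inside any ball $\calB(u,Q)$ is $\ell\,N(u,Q)/n$, proportional to the ball's size, so a dense pocket (large $N(u,Q)$) receives proportionally more helpers to share the load, whereas a sparse pocket receives proportionally fewer targets. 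I would turn this into an explicit randomized assignment — e.g.\ each node independently offers to help one (or $\tilO(1)$) of the targets within $Q$ hops of it, chosen uniformly — so that property~(3) holds by construction, and then argue that every target still collects $\Omega(b)$ helpers w.h.p.

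The main obstacle is precisely this coverage-versus-load trade-off. Bounding the expected number of helpers a target $w$ receives leads to a sum of the form $\sum_{u\in\calB(w,Q)}1/(1+\ell N(u,Q)/n)$, and controlling it requires understanding the interplay between dense and sparse neighborhoods: one has to rule out the bad configuration in which many random targets have large, heavily overlapping balls and therefore compete for the same few helpers. The key lever against this is again that only $\ell\le\NQpar$ targets are placed and that they are random, so by concentration the number of targets landing in any region $R$ is $\tilO(1+\ell|R|/n)$ w.h.p.; combined with the supply bound $N(u,Q)\ge\tilOm(k/\gamma Q)$ this should certify a Hall-type feasibility condition $\tilO(1)\cdot|\calB(S,Q)|\ge b|S|$ for every relevant set $S$ of targets, from which an assignment of the desired size with $\tilO(1)$ load exists. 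Making the concentration uniform over all relevant regions (rather than a single fixed one) and reconciling it with the per-node load cap is, I expect, the technically delicate part. Once feasibility is established, the assignment itself can be fixed using $\tilO(1)$ rounds of the global network for aggregation and $\tilO(Q)$ local rounds to notify helpers, for a total of $\tilO(\NQpar)$ rounds.
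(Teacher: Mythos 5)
Your preprocessing, your handling of the $|W|\in\tilO(1)$ case, and your size bound $N(v,\NQpar)\ge k/((\NQpar\p1)\gamma)$ via Lemma~\ref{lem:neigh_qual_min_neigh_size} are all fine and match what the paper needs. However, the heart of the lemma is the case of $\ell\le\NQpar$ random targets, and there your argument has a genuine gap that you yourself flag but do not close. The concrete mechanism you propose --- each node offers to help one (or $\tilO(1)$) of the targets within $Q$ hops of it, chosen uniformly --- does not achieve property (1). The supply available to a target $w$ lives in $\calB(w,Q)$, but the competition for a node $u\in\calB(w,Q)$ is measured by the targets in $\calB(u,Q)\subseteq\calB(w,2Q)$, and these two balls can have very different sizes: if $N(w,Q)=\Theta(k/\gamma Q)$ while every $u\in\calB(w,Q)$ has $N(u,Q)=\Theta(n)$, then each $u$ sees $\Theta(\ell)$ targets and helps $w$ with probability only $O(1/\ell)$, giving $\E(|H_w|)=O(k/(\gamma Q\ell))$ --- a factor $\ell$ (up to $\NQpar$) short of the required $\tilT(k/\gamma\NQpar)$. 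Falling back on a Hall-type feasibility condition plus an abstract matching does not rescue this: even granting feasibility, you would still need a distributed algorithm that \emph{finds} such an assignment in $\tilO(\NQpar)$ rounds, which is exactly the content of the lemma and is left entirely unargued.

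The paper avoids this mismatch of scales by first computing a $(2\NQpar\p1,\,2\NQpar\lceil\log n\rceil)$-ruling set and clustering every node to its nearest ruler. The separation $\alpha=2\NQpar\p1$ forces each cluster $C_r$ to contain the full ball $\calB(r,\NQpar)$, so $|C_r|\ge N(\NQpar)\ge k/((\NQpar\p1)\gamma)$, while the covering radius keeps the cluster diameter at $\tilO(\NQpar)$ for property (2). Each node then joins $H_w$ only for targets $w$ in its \emph{own} cluster, with probability $q=\min\bigl(\tfrac{k}{\gamma\NQpar}\cdot\tfrac{8c\ln n}{|C_r|},1\bigr)$; because supply ($|C_r|$) and expected demand ($\E|C_r\cap W|=|C_r|\ell/n$) are now measured on the same set, a single Chernoff bound gives both $|H_w|\in\tilT(k/\gamma\NQpar)$ and load $\tilO(1)$ per node w.h.p. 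If you want to complete your proof, you should replace the uniform per-node offer with some such partition-based (or at least supply-proportional) assignment; as written, the load-versus-coverage trade-off you identify as ``the technically delicate part'' is exactly the part that is missing, and your candidate rule provably fails it.
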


\begin{proof}
	First, nodes collaborate to compute the parameter $\NQpar$ in $\tilO\big(\NQpar\big)$ rounds, see Lemma \ref{lem:compute_neigh_qual}. The next step is to compute a so called $(\alpha, \beta)$-ruling set $R \subseteq V$ for some specific parameters $\alpha, \beta$. Such a set fulfills the property that any two nodes in $R$ have hop distance at least $\alpha$ and at the same time each node in $v \in V$ has a ``ruler'' $r \in R$ within $\beta$ hops.  If $\beta$ is by a logarithmic factor larger than $\alpha$, then $R$ can be computed in $\tilO(\alpha)$ rounds using only the local network (\cite{Awerbuch1989} achieves this in the \LOCAL model, whereas \cite{Kuhn2018a} shows that \CONGEST suffices). 	
	
	 We specifically compute a $(\alpha , \beta)$-ruling set with parameters $\alpha = 2\NQpar\p 1$, $\beta = 2\NQpar\lceil \log n \rceil$ in $\tilO\big(\NQpar\big)$ rounds using \cite{Awerbuch1989, Kuhn2018a}. The next step to obtain the adaptive helper sets is to compute a clustering $C_r \subseteq V, r \in R$ based on this ruling set, which partitions the nodes into clusters by each node $v$ joining the cluster $C_r$ of the ruler $r$ that has smallest hop distance to $v$ (break ties arbitrarily).  
	 
	 In a sense, such a clustering locally partitions $G$ into ``areas of responsibility'', and allows us to assign each node $w \in W$ a helper set $H_w$ from the cluster $C_r$ that $w$ is located in, in a fair way such that no node in $C_r$ has to help too many nodes from $W$.
	 For this to work, the nodes in $C_r$ first learn the sets $C_r$ and $C_r \cap W$. Note that the diameter of $C_r$ is in $\tilO\big(\NQpar\big)$, because nodes can always join a cluster that is at distance at most $\beta$ thus we can compute the required information in the same number of rounds using the local network. Then, for each node $w \in C_r \cap W$, each $v \in C_r$ joins the helper set $H_w$ with probability 
	 $$q = \min\big(\tfrac{k}{\gamma\NQpar} \cdot \tfrac{1}{|C_r|} \cdot  8c \ln n, 1\big)$$ 
	 It is clear that this assignment of helper sets fulfills property (2) of Definition \ref{def:helpers} due to the diameter of $C_r$. Let us now look at the size of each helper set. In case $q=1$ this means that each node in the cluster $C_r$ that $w$ is located in is drafted into $H_w$. Furthermore, within $(\alpha \m 1)/2 = \NQpar$ hops all nodes necessarily belong to $C_r$ by definition of the ruling set. Now we have to exploit the neighborhood quality of $r$. Using Lemma \ref{lem:neigh_qual_min_neigh_size} we have $$|H_w| = |C_r| \geq N(d') \geq k/(\NQpar\p 1)\gamma \in \tilT\big(k/\gamma\NQpar\big)$$ 
	 In case $q<1$, the expected size of $H_w$ is \smash{$\E(|H_w|) = \tfrac{k}{\gamma\NQpar} \cdot  8c \ln n$}, which is chosen such that we can conveniently apply a Chernoff bound (given in Lemma \ref{lem:chernoffbound} for completeness) to lower bound the size of $H_w$:
	 \[
	 	\Pr\Big(|H(X)| \leq  (1 - \tfrac{1}{2}) \E(|H_w|)\Big) \leq \exp \Big(\!-\!\tfrac{\E(|H_w|)}{8}\Big) \leq \exp \Big(\!-\!\tfrac{8c \ln n}{8}\Big) = \tfrac{1}{n^c}. 
	 \]
	 Therefore, the size of $H_w$ is at least $\E(|H_w|) /2 \in \tilO\big(k/\gamma\NQpar\big)$ w.h.p. 
	 
	 The third property is clear if $|W| \in \tilO(1)$. Else, we use that since $W$ consists of $\ell$ nodes chosen i.i.d.\ randomly from all $n$ nodes, thus the expected size of $C_r \cap W$ is 
	 $$\E\big(|C_r \cap W|\big) = |C_r| \cdot \tfrac{\ell}{n} \leq |C_r| \cdot  \tfrac{\gamma \NQpar}{k}.$$
	 Let $J_v = |\{w \in W \mid v \in H_w\}|$ be the random number of helper sets that a node $v \in C_r$ joins. In expectation these are at most $\E(J_v) = q \cdot  \E\big(|C_r \cap W|\big) \leq 8 c \ln n$ nodes. Again, we employ a Chernoff bound (Lemma \ref{lem:chernoffbound}) to show $J_v \in \tilO(1)$ w.h.p.:
	 \[
	 	\Pr\big(J_v \geq  (1+1) 8 c \ln n\big) \leq \exp \big(\!-\!\tfrac{8c \ln n}{3}\big) \leq  \tfrac{1}{n^c}. 
	 \]
	 Finally, we have to mention that the number of events that must occur simultaneously w.h.p.\ $1-\frac{1}{n^c}$ is polynomial in $n$. The constant $c$ can be adapted such that \textit{all} of them occur w.h.p.\ using a union bound (Lemma \ref{lem:unionbound}).
\end{proof}

\subsection{Solving the Routing Problem in \smash{$\boldmath \tilO\big(\NQpar\big)$} rounds}
\label{sec:adaptive_helper_sets}

Now, we are ready to efficiently solve the routing problem where each source has to send one distance label to each target in \smash{$\tilO\big(\NQpar\big)$} rounds. Note that Theorem \ref{thm:routing} is potentially much better  than the \smash{$\tilT\big(\!\sqrt{k}\big)$} complexity on worst case graphs (cf. \cite{Kuhn2020, Schneider2023}), as shown out in Lemma \ref{lem:sqrt_k_upper_bound}.

\begin{theorem}
	\label{thm:routing}
	Given $k$ source nodes and $\ell \in \tilO(1)$ arbitrary targets or $\ell \leq \NQpar$ i.i.d.\ random target nodes, where each source has $\ell$ tokens of size $\bigO(\log n)$ bits, one for each target. All tokens can be routed to their intended targets in \smash{$\tilO\big(\NQpar\big)$} rounds w.h.p.\ in the \hybridpar{\infty}{\gamma} model.
\end{theorem}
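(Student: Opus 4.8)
The plan is to reduce the routing task to a short sequence of load-balanced communication steps, each running in $\tilO\big(\NQpar\big)$ rounds, using the adaptive helper sets of Definition \ref{def:helpers} as ``landing zones'' near each target. First I would invoke Lemma \ref{lem:helpers} to compute a family $\{H_w \mid w \in W\}$ of adaptive helper sets in $\tilO\big(\NQpar\big)$ rounds, where $W$ is the set of targets. The guiding idea is that each target $w$ must ultimately collect the $k$ tokens addressed to it (one per source), which is too much to receive over the global network directly; instead these $k$ tokens are spread evenly across $H_w$, whose size $\tilT\big(k/\gamma\NQpar\big)$ is calibrated so that every helper is responsible for only $\tilO\big(\gamma\NQpar\big)$ tokens. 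Since by property (3) each physical node lies in only $\tilO(1)$ helper sets, the total receive load of any single node stays $\tilO\big(\gamma\NQpar\big)$ tokens.

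The first main step is the global source-to-helper transfer. I would fix a balanced assignment sending source $s$'s token for target $w$ to a designated helper of $H_w$: rank the $k$ sources (a global prefix-sum, $\tilO(1)$ rounds as in Lemma \ref{lem:compute_neigh_qual}) and the $|H_w|$ helpers of each target (locally within its cluster, which has radius $\tilO\big(\NQpar\big)$), then route the token of the rank-$r$ source to the rank-$\lceil r|H_w|/k\rceil$ helper, so each helper is hit by $\tilO\big(\gamma\NQpar\big)$ tokens. Executing this is then an instance of load-balanced token routing with per-node send load $\ell \le \NQpar$ and receive load $\tilO\big(\gamma\NQpar\big)$, which a \hybridpar{\infty}{\gamma} routing primitive (adapted from \cite{Augustine2019, Kuhn2022}) delivers in $\tilO\big(\NQpar\big)$ rounds w.h.p.

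The second main step is the local helper-to-target delivery. Each target floods its $\tilO\big(\NQpar\big)$-hop neighborhood to build a BFS tree ($\tilO\big(\NQpar\big)$ rounds, using property (2)), and the tokens held by the helpers are convergecast up these trees. Because local bandwidth is unbounded ($\lambda=\infty$), a node can forward along each incident edge, in a single round, all tokens it relays for all targets at once; hence the convergecasts of all targets run concurrently and finish in $\tilO\big(\NQpar\big)$ rounds, the maximum tree depth. For $\ell \in \tilO(1)$ arbitrary targets the load bounds are immediate since only $\tilO(1)$ helper sets exist; for $\ell \le \NQpar$ i.i.d.\ random targets they rely on the $\tilO(1)$-overlap of property (3), which is exactly where the random placement of targets is essential. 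The sending side deliberately uses no helpers, because the sources may be arbitrarily concentrated, which is precisely why $\ell$ must be capped at $\NQpar$.

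The step I expect to be the main obstacle is the addressing inside the global transfer: a source must learn the identity of the helper it is assigned to, yet naively publishing all $\sum_w |H_w|$ helper identities to every source would cost $\tilOm\big(k/(\gamma^{2}\NQpar)\big)$ rounds, far exceeding $\tilO\big(\NQpar\big)$ when $\NQpar \ll \sqrt{k/\gamma}$ (Lemma \ref{lem:sqrt_k_upper_bound}), while funnelling a per-source request through $w$ itself would overload $w$ with $\Omega(k)$ messages. I would resolve this with a hashed naming service: the rank-$j$ helper of $H_w$ writes its identity to the node $\mathrm{hash}(w,j)\in[n]$, and each source reads the $\ell$ entries it needs from the corresponding hashed locations. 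As there is one writer per pair $(w,j)$ and exactly the $\tilO\big(\gamma\NQpar\big)$ sources assigned to $(w,j)$ read it, both the write and read phases are themselves balanced token-routing instances that finish in $\tilO\big(\NQpar\big)$ rounds, closing the argument.
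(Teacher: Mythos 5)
Your proposal is correct in outline and matches the paper's high-level architecture (compute \NQpar and the adaptive helper sets of Lemma \ref{lem:helpers}, move tokens globally onto the helpers, then convergecast locally over $\tilO\big(\NQpar\big)$ hops), but it resolves the central addressing obstacle by a genuinely different mechanism. The paper never lets a source learn any helper's identity: it routes each token to a \emph{rendezvous} node determined by public hash functions $h:[k]\times[\ell]\to[n]$ and $g:[n]\to[k]$ (sources \emph{push} token $(i,j)$ to node $h(i,j)$ in throttled batches; each target splits the $k$ tasks $(i,j)$ among its helpers, who then \emph{pull} from the hash-determined intermediates), and the load analysis reduces to bounding $|h^{-1}(u)|$ and $|g^{-1}(i)|$ via balls-into-bins. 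You instead fix an explicit rank-based source-to-helper assignment and build a hashed \emph{directory} so that each source can look up the ID of its assigned helper and then send to it directly. Both yield $\tilO\big(\NQpar\big)$ rounds; the paper's design buys the property that every global receive is either at a pseudo-uniformly random location or is a response to a request the receiver itself paced, so Lemma \ref{lem:receiveBound} applies off the shelf, whereas your design is arguably more direct (one global hop for the tokens instead of two) and makes the $\tilO(\gamma\NQpar)$ per-helper load calibration very transparent.

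The one step you should not leave to an unspecified ``routing primitive'' is the many-to-one traffic your design creates: the $\tilO(\gamma\NQpar)$ sources assigned to a single helper (and likewise the sources reading a single directory entry $(w,j)$) do not know of one another and, without coordination, could all transmit in the same round, exceeding the receive capacity $\gamma$ and triggering the adversary of Definition \ref{def:hybrid}. This is fixable — e.g., each source delays each transmission by an independent uniformly random offset in $[\Theta(\NQpar\log n)]$, after which a Chernoff bound gives $\tilO(\gamma)$ bits received per node per round w.h.p. — and you should also state that the target IDs, the sizes $|H_w|$, and the hash seed are broadcast (as in the paper, $\tilO(\ell)\subseteq\tilO\big(\NQpar\big)$ rounds via \cite{Augustine2020a}) so that sources can actually evaluate their assignment. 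With those two points made explicit, your argument goes through.
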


\begin{proof}
	First we compute the parameter \NQpar in \smash{$\tilO\big(\NQpar\big)$} rounds w.h.p.\ according to Lemma \ref{lem:compute_neigh_qual}. Let $S$ be the set of $k$ sources and let $T$ be the set of $\ell$ targets. 
	Next, the nodes in $T$ broadcast their IDs, which can be accomplished w.h.p.\ in $\tilO\big(\!\sqrt \ell \big)$ rounds using the broadcast protocol by \cite{Augustine2020a} so all nodes can order $T = \{t_1, \dots, t_\ell\}$ by ascending IDs. We also aggregate the number of sources $k$ in $\tilO(1)$ rounds using the aggregation routine by \cite{Augustine2019}. 
	
	Then each $t \in T$ computes its adaptive helper set $H_t$ in \smash{$\tilO\big(\NQpar\big)$} rounds w.h.p., as shown in Lemma \ref{lem:helpers}. The idea for the remaining proof is to assign each sender $s \in S$ a helper $v \in H_t$ of each target node $t \in T$. If $s$ would know the helpers that are supposed to receive its tokens, then it could in theory send them its $\ell$ tokens sequentially. This assignment can theoretically be done such that no helper has too many tokens to receive. The caveat is that there are too many helpers (close to $k$) and sources ($k$) to make their IDs known to each other in  \smash{$\tilO\big(\NQpar\big)$} rounds, so this is out of the question.
	
	The required information to assign helpers of targets to senders can be condensed significantly by instead relaying tokens via a pseudo-random set of intermediate nodes (cf.\ \cite{Kuhn2020}) using a hash-function selected randomly from a suitable universal family $\calH$. For this to work it suffices to publish a small random seed using the global network.
	The computational and probabilistic aspects of  a suitable family $\calH$ from the literature (in particular the indpenece number of hash values) are summarized in Definition \ref{def:hashfunctions} and Lemma \ref{lem:hashfunctions} in Appendix \ref{apx:Pseudorandom}.
	
	We use a family $\calH$ consisting of functions $h : [k] \times [\ell] \to [n]$. That is, $h \in \calH$ maps a pair $(i, j), i \in [ k], j \in [\ell]$ to an intermediate node with ID $h(i,j) \in [n]$ over which the token from source $i$ to target $j$ will be relayed.
	Since the IDs and the order of the target nodes $T = \{t_1, \dots , t_\ell\}$ are publicly known, every node can relate index $j$ to $ID(t_j)$. The same is not true for the set of sources $S$, which we cannot publish as it is too large. 
	To this more cheaply, we define another family $\calG$ of functions $g : [n] \to [k]$ intended to relate IDs of the sources $S$ to indices $[k]$. In the following we assume the two random seeds to select $h \in  \calH, g \in  \calG$ are already known, so nodes can use $g,h$ (we determine the size of the two seeds afterwards).
	
	We are now ready to describe the transmission process. Here, nodes may assume multiple roles as senders, targets, helpers or intermediates and we are going to analyze and bound the workload of each node later. The rough idea is to first transmit all tokens from senders to intermediate nodes via the global network. Then helpers $H_{t_j}$ request the tokens for target node $t_j$ from the intermediate nodes. After the helpers obtained all tokens from intermediate nodes the target node $t_j$ collects all tokens from its helpers via the local network.
	
	In more detail, first $s \in S$ sends each token and its identifier $ID(s)$ to the corresponding intermediate node with ID $h\big(g(ID(s)), j\big)$, where $j \in [\ell]$ is the intended target. We throttle the number messages per round to batches of size at most \smash{$b \in \Omega\big(\gamma/{\log^2 n}\big)$}, to not exceed the global receive capacity $\gamma$ of intermediate nodes as we show later (note, $b\geq 1$ for some $\gamma \in \tilOm(1)$, cf., Definition \ref{def:hybrid}). After that, all tokens are located at intermediate nodes ready to be retrieved by the helpers $H_{t_j}$ of target nodes $t_j \in T$.	
	
	For this purpose, each target $t_j \in T$ creates $k$ tasks $(i, j), i \in [k]$, which it distributes evenly among its helpers using the local network. Subsequently, each helper $v \in H_{t_j}$ sends a request $(ID(v),i,j)$ to the node with ID $h(i, j)$, for each task $(i,j)$ it was assigned. Specifically, $v$ sends a batch of size $b$ of such requests per round.	 
	 On receiving a request $(ID(v),i,j)$ in the previous round, the node with ID $h\big(i, j\big)$ responds by sending all tokens with source $s \in S$ with $g(ID(s))=i$ and target $t_j$ to $v$. 
	 After all helpers received the responses for all of their requests in this way, the corresponding tokens adressed to target $t_j$ are now located at its helpers $H_{t_j}$ and the local network can be used to collect them.
	 
	 We show two properties that complete the proof. Firstly, we prove that the number of rounds it takes to transmit all messages over the local and global network, respectively, is at most \smash{$\tilO\big(\NQpar\big)$}. Secondly, we prove that the procedure correctly delivers all tokens addressed to any $t_j$. The main challenge here is to show that the number of global messages any node receives in any given round is bounded by $\gamma$, so that no messages are dropped and lost (see Definition \ref{def:hybrid}).
	 
	We start with the round complexity, which by the design of the algorithm can be split into rounds of only local or global communication. The only local communication that takes place is between target nodes $t_j \in T$ and their helper sets $H_{t_j}$ to assign tasks and subsequently collect the tokens. Since, by Lemma \ref{lem:helpers}, helpers $H_{t_j}$ are within \smash{$\tilO\big(\NQpar\big)$} hops of $t_j$ and since the local network is unrestricted in terms of capacity we can do this for all helper sets in \smash{$\tilO\big(\NQpar\big)$} rounds.
	
	On the global network, we look at the round complexity of nodes in their different roles. 	
	First, the source nodes $S$ send their $\ell$ tokens sequentially to intermediate nodes defined by the hash functions $h,g$, which takes at most $\tilO(\ell) \in \tilO(\NQpar)$ rounds. Consider a  node $v$ that is a member of at least one helper set $H_{t_j}$. In total, there are $|H_{t_j}| \in \tilT\big(k/\gamma\NQpar\big)$ helpers available to $t_j$ for distributing its $k$ tasks $(i,j)$. Thus $v$ is assigned at most $\lceil k/|H_{t_j}|\rceil \in \tilO(\gamma\NQpar)$ tasks. By Lemma \ref{lem:helpers}, $v$ has to help at most $\tilO(1)$ target nodes $t_j \in T$, thus the total assigned tasks by all $t_j \in T$, that $v$ is a helper of is still $\tilO(\gamma\NQpar)$. Since $v$ sends $\gamma/b \in \tilO(\gamma)$ requests per round, the time to send all requests it was tasked with takes $\tilO(\NQpar)$ rounds.

	Let us now look at some intermediate node $u$, i.e., a node that obtains at least one token by a sender $s$ with $g(ID(s))=i$, which $u$ has to forward to some helper $v$ of some target $t_j$, after receiving the according request $(ID(v),i,j)$. We bound the number $X_u = |\{(i,j) \mid u = h(i,j)\}|$ and the number $Y_i = |\{s \in S \mid i = g(ID(s))\}|$ for $i \in [k]$. The random value $X_u$ is the number pairs $(i,j)$ that $u$ is responsible for. The second is the number of ID's of some $s \in S$ for which an index in $i \in [k]$ is ``hit'' by the hash function $g$, i.e., $g(ID(s)) = i$.
	
	We assume that $h,g$ provide i.i.d.\ randomness with $\bigO\big(\ell \log n\big)$ independence. Then 	 bounding $X_u,Y_i$ corresponds to simple balls into bins problems. For $X_u$, we have $k \cdot \ell$ balls and $n\, (\geq k)$ bins. For $Y_i$ we have $|S| = k$ balls and $k$ bins (indices in $[k]$). We give a simple solution for these balls into bins problems in the separate Lemma \ref{lem:balls_in _bins} in Appendix \ref{apx:generalnotations}, which implies $X_u \in \bigO(\ell\log n)$ and $Y_i \in \bigO(\log n)$ w.h.p.
	We conclude that no intermediate node $u$ obtains more than $X_u \cdot Y_i  \in \tilO(\ell)$ requests w.h.p., which means that $u$ can forward all tokens which are requested within $\tilO(\ell)$ rounds w.h.p.
	
	Let us finally come to the correctness, by which we mean that no node receives more than $\gamma$ bits via the global network in any given round w.h.p., thus no message is dropped (cf., Definition \ref{def:hybrid}). Note that only intermediate nodes and helpers have to receive global messages in the algorithm described above. We start by showing the claim for intermediate nodes.
	
	Given that no source node sends more than $b$ tokens per round to intermediate nodes selected uniformly at random, then the number of received tokens per round is $\bigO(b)$ w.h.p., by Lemma \ref{lem:receiveBound}. However, intermediate nodes are selected according to some hash function $h(i,j)$, where $g\big(ID(s)\big)=i$. Assuming $h$ has independence at least $\Omega(b)$ the claim of by Lemma \ref{lem:receiveBound} is retained, given that for any two sources $s_1,s_2$ it is $g\big(ID(s_1)\big) \neq g\big(ID(s_2)\big)$, since then we still have unique keys. 
	
	That is unfortunately not the case, however, the number of sources $s \in S$ with the same value $i = g\big(ID(s)\big)$ can instead be bounded by $Y_i \in \bigO(\log n)$ w.h.p., that is, we can guarantee that no node receives more than $\bigO(b \cdot Y_i) = \bigO(b \cdot \log n) $ messages per round w.h.p. Given the maximum size of a message is $\bigO(\log n)$, the number of bits received per round is at most $\bigO(b \cdot \log^2 n)$. Adjusting the constant in the batch size $b \in \Omega(\gamma/\log^2 n)$ accordingly, no intermediate node receives more than $\gamma$ bits per round.	
	
	Finally we analyze the seeds for the required hash functions $h,g$ that need to be shared. We require $h,g$ to map values in sets that are at most as large as the ID space we are working with and these are required to provide i.i.d.\ randomness with independence $\bigO\big(\max(b, \ell \log n)\big)$. By Lemma \ref{lem:hashfunctions} these requirements can be met using seeds of size at most $\tilO\big(\max(\gamma, \ell )\big)$. Moreover, the seeds can be determined locally by a single node and then broadcast in $\tilO(\ell)$ rounds using the broadcast algorithm by \cite{Augustine2020a}.
\end{proof}

\subsection{Solving \boldmath $(k,\ell)$-SSP in $\tilO\big(\NQpar\big)$ rounds}

It remains to combine the tools derived in the previous two subsections with the fast solutions for $\ell$-SSP to prove the following theorem.

\begin{theorem}
	\label{thm:uni_upper_bound}
	Let $\eps > 0$ be an arbitrary constant. The $(k,\ell)$-SP problem can solved w.h.p.\ in the \hybridpar{\infty}{\gamma} model in $\tilO\big(\NQ(G,k,\gamma)\big)$ rounds with stretch $(1 \p \eps)$ for $\ell  \in \bigO(1)$  and with stretch $(3 \p \eps)$ for $\ell \leq \NQpar$  i.i.d.\ random target nodes. 
\end{theorem}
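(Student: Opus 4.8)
The plan is to assemble the theorem from the two workhorses developed above, namely the fast $\ell$-SSP primitive of Lemma~\ref{lem:k-ssp} and the token routing procedure of Theorem~\ref{thm:routing}, using one structural observation: since $G$ is undirected, $d_G(s,t)=d_G(t,s)$, so I may \emph{swap the roles of sources and targets}. Instead of letting the $k$ sources compute distances to the $\ell$ targets directly, I run an $\ell$-SSP instance in which the target set $T=\{t_1,\dots,t_\ell\}$ plays the role of the sources. After this computation every node --- in particular every source $s$ --- knows an approximation of $d_G(t_j,s)=d_G(s,t_j)$ for each $t_j$. What is left is a pure communication task: the distance labels that the targets must output are held by the sources, so they must be shipped back, which is precisely the problem solved by Theorem~\ref{thm:routing}.

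Concretely I would run three phases, each costing $\tilO(\NQpar)$ rounds. \textbf{(i)} Compute $\NQpar$ and broadcast it, in $\tilO(\NQpar)$ rounds by Lemma~\ref{lem:compute_neigh_qual} (this is also the first step of the routing in any case). \textbf{(ii)} Compute approximate distances from the targets. For $\ell\in\bigO(1)$, invoke the $(1\p\eps)$-SSSP routine of Lemma~\ref{lem:k-ssp} once from each of the $\bigO(1)$ targets, at total cost $\tilO(1/\eps^2)\subseteq\tilO(\NQpar)$ (using $\NQpar\geq 1$) and with stretch $1\p\eps$; for $\ell\leq\NQpar$ i.i.d.\ random targets, make a single $\ell$-SSP call, costing $\tilO(\sqrt{\ell/\gamma}\cdot 1/\eps^2)$ rounds with stretch $3\p\eps$. \textbf{(iii)} Each source $s$ assembles $\ell$ tokens, the $j$-th being its computed approximate distance to $t_j$ --- an $\bigO(\log n)$-bit value since weights are polynomial --- and route all tokens to their intended targets in $\tilO(\NQpar)$ rounds via Theorem~\ref{thm:routing}. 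At the end each target $t_j$ holds $\tilde d(s,t_j)$ for every source $s$, exactly as Definition~\ref{def:kSSP} demands.

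To close the round-complexity bookkeeping in phase (ii) for the random-target case, I would observe that $\ell\leq\NQpar$ together with $\gamma\geq 1$ (we are in the regime $\gamma\in\tilOm(1)$) gives $\sqrt{\ell/\gamma}\leq\sqrt{\ell}\leq\sqrt{\NQpar}\leq\NQpar$ by Lemma~\ref{lem:sqrt_k_upper_bound}, so the whole computation stays within $\tilO(\NQpar)$ rounds. The stretch bounds transfer verbatim from Lemma~\ref{lem:k-ssp} by distance symmetry, and since routing delivers every token intact no approximation quality is lost. As each guarantee holds w.h.p.\ and only constantly many phases are involved over a polynomial number of events, a union bound yields overall correctness w.h.p.

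I do not anticipate a substantive obstacle in this final step, since the genuinely hard work has already been isolated into the earlier results: the delicate matter of bounding each node's per-round global receive load by $\gamma$ --- even when the target nodes are locally concentrated --- is exactly what Theorem~\ref{thm:routing} resolves through the adaptive helper sets of Lemma~\ref{lem:helpers}. The only points needing care here are the correctness of the role reversal (which hinges on undirectedness) and verifying that the $\ell$-SSP cost is dominated by $\NQpar$ in both regimes, neither of which is difficult.
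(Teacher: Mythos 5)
Your proposal is correct and follows essentially the same route as the paper: solve $\ell$-SSP with the targets playing the role of sources via Lemma~\ref{lem:k-ssp} (exploiting symmetry of distances in an undirected graph), then ship the resulting labels from the sources back to the targets using the routing procedure of Theorem~\ref{thm:routing}, with the $\ell \leq \NQpar$ bound ensuring the $\ell$-SSP phase stays within $\tilO\big(\NQpar\big)$ rounds. Your bookkeeping of the round complexity and the stretch transfer matches the paper's (terser) argument.
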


\begin{proof}
	The first step is to solve the $\ell$-SSP problem for the set of target nodes using Lemma \ref{lem:k-ssp}. Since we consider $\varepsilon$ constant and $\ell \leq \NQpar$, the round complexity for this step is at most $\tilO\big(\NQpar\big)$ and the stretch is $1 \p \eps$ or $3 \p \eps$ depending on the set of target nodes.	
	Afterwards, each source $s$ knows its distance to each target $t$, but we require this to be the case the other way around in order to solve the $(k,\ell)$-SSP problem. This is now a simple application of Theorem \ref{thm:routing}, where each source $s$ puts its distance to each target $t$ into a token, and all such tokens are delivered in $\tilO\big(\NQpar\big)$ rounds.
\end{proof}

\section{Neighborhood Quality - Lower Bound for \boldmath $(k,1)$-SP}
\label{sec:lower_bound}

We will show that the neighborhood quality parameter constitutes a lower bound for the $k$-SSP problem, even for the $(k,1)$-SP problem where only a single node must learn its distance to $k$ others and even if all nodes know $G$. We express this in the following theorem and dedicate the remainder of this section to its proof.

\begin{theorem}
	\label{thm:universal_lower_bound}
	Assume each node obtains the local graph $G=(V,E)$ as part of its input. Then a randomized algorithm that computes a  polynomial approximation of the $(k,1)$-SP problem (see Def.\ \ref{def:kSSP}) on $G$  with constant probability $p$ in the \hybridpar{\infty}{\gamma} model still takes \smash{$\Omega\big(\NQ(G,k,\gamma)\big)$} rounds.
\end{theorem}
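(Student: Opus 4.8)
The plan is to designate the node with the worst neighborhood quality as the sole target and to show, by an information‑theoretic cut argument in the time‑expanded communication graph, that this node cannot accumulate the $\Omega(k)$ bits it must learn in fewer than $\Omega\big(\NQpar\big)$ rounds. Concretely, let $v := \argmax_{u\in V}\NQ(u)$, so that $\NQ(v)=\NQpar$ (Def.~\ref{def:neigh_qual_node}), and make $v$ the unique target. The topology $G$ stays fixed and is revealed to everyone; only the edge weights and the source set form the secret part of the input $I$. Set $T := \lfloor c\,\NQ(v)\rfloor$ for a small constant $c<1$ (we may assume $\NQ(v)\ge 2$, otherwise the bound is vacuous). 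I would encode a uniformly random string $S\in\{0,1\}^m$ with $m=\Theta(k)$ into the part of $G$ lying outside $\calB(v,T)$. As the arithmetic below shows, $N(T,v)\gamma \le k/\NQ(v)$, so the ball is small and at least $k/2$ nodes lie outside it; I would declare these sources, each carrying one bit $b_i$ of $S$ in the weights of its incident edges. The bit is made \emph{robust}: the two weight settings place $d_G(s_i,t)$ in two scales separated by a polynomial factor exceeding $\alpha$ (e.g.\ a cheap escape of total weight $\le n$ versus a forced heavy edge of weight $W=n^{\Theta(1)}$), so that any polynomial‑stretch estimate reveals $b_i$. Fixing every in‑ball edge to weight $1$ makes the initial state of $\calB(v,T)$ independent of $S$. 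Thus if an algorithm solves $(k,1)$-SP with probability $p$, then $t$ recovers $S$ with probability $p$.

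The technical heart is the claim that after $T$ rounds the view of $t$ depends on $S$ only through global messages received by nodes inside $\calB(v,T)$, and that $I\big(S;\mathrm{view}(t)\big)\le N(T,v)\cdot\gamma\cdot T$. I would prove this by a cut argument in the time‑expanded graph: since $S$ is independent of the in‑ball initial state, any chain of influence from an $S$‑carrying source to $t$ at time $T$ must use a global edge, and its last global hop delivers a signal to some node $w$ at round $\tau$, from which the signal must still reach $v$ locally in the remaining $T-\tau$ rounds, forcing $w\in\calB(v,T-\tau)$. Summing the per‑round global receive capacity over this shrinking light cone gives $\sum_{\tau=1}^{T} N(T-\tau,v)\gamma \le T\,N(T,v)\gamma$, and the data‑processing inequality bounds the mutual information by this cut capacity. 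For a lower bound I may assume the strong adversary never drops a message.

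It remains to tie $T$ back to $\NQ(v)$ and invoke Fano. Plugging $d=T$ into Def.~\ref{def:neigh_qual_node} and using $T<\NQ(v)$, the outer maximum cannot be attained by $T$, so $k/(N(T,v)\gamma)\ge\NQ(v)$; equivalently $N(T,v)\gamma T\le (k/\NQ(v))\cdot c\,\NQ(v)=ck$, whence $I\big(S;\mathrm{view}(t)\big)\le ck$ by the cut bound. On the other hand, recovering $m=\Theta(k)$ uniform bits with probability $p$ forces $I\big(S;\mathrm{view}(t)\big)\ge p\,m-1=\Omega(k)$ by Fano's inequality. Choosing $c$ small enough that $ck<p\,m-1$ produces a contradiction, so every algorithm needs $T=\Omega(\NQ(v))=\Omega\big(\NQpar\big)$ rounds, as claimed.

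The step I expect to be the main obstacle is the embedding of the first paragraph on an \emph{arbitrary} fixed $G$: I must realize $\Omega(k)$ mutually independent, approximation‑robust bits purely by choosing weights on the far part of $G$, which requires the chosen sources not to sit on one another's cheap escape routes, so that each $d_G(s_i,t)$ is governed by its own bit $b_i$ alone. I would address this by selecting the far sources so that, after the heavy edges are inserted, each still retains a weight‑$O(n)$ path to $t$ that avoids the heavy edges of the others, and by confirming the room estimate $|V\setminus\calB(v,T)|\ge k/2$ from the smallness of the ball. A secondary point that needs care is making the cut/data‑processing bound rigorous under adaptivity and shared randomness; this is standard, but it must be phrased against the time‑expanded graph rather than round by round.
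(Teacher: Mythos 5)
Your high-level strategy coincides with the paper's: isolate the worst node $v=\argmax_u\NQ(u)$, make it the sole target, observe that by optimality of $d_v$ the ball $\calB(v,d)$ for $d<\NQ(v)$ has at most $k/(\NQ(v)\gamma)$ nodes so its total global receive capacity over $c\,\NQ(v)$ rounds is $\le ck$ bits, and derive a contradiction with the $\Omega(k)$ bits of entropy the target must acquire (the paper packages this capacity/entropy step as a black-box reduction to the ``node communication problem'' of Kuhn et al.\ rather than re-deriving the cut bound, but that part of your argument is sound). The gap is in the embedding, exactly where you flag it, and your proposed repair does not work. Encoding bit $b_i$ in the weights of the edges incident to source $s_i$ requires that, when $b_i=0$, $s_i$ retains a light path to $v$ avoiding every other source's heavy edges. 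On a path graph --- which is the canonical instance with $\NQ\in\Theta(\sqrt{k/\gamma})$, so precisely where the bound must be tight --- the $\Omega(k)$ far sources have nested unique paths to $v$: any source's heavy incident edges lie on the unique $v$-path of every source beyond it, so a single $b_i=1$ forces $d(v,s_{i'})\ge W$ for all farther $s_{i'}$ and destroys decodability of their bits. No selection of $\Omega(k)$ sources with mutually avoiding escape routes exists there, and polynomial weights do not leave room to separate $k$ scales. A secondary leak: edge weights are part of the input of \emph{both} endpoints, so a source just outside $\calB(v,T)$ with a weighted edge into the ball reveals its bit to an in-ball node at time $0$, breaking the independence of $S$ from the in-ball initial state that your cut argument needs.

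The paper circumvents both problems by keeping the weight function \emph{fixed} (independent of the secret) and encoding the secret in the source placement instead: Lemma~\ref{lem:node_partition} uses a BFS tree of $v$ and a splitting-node argument (Lemma~\ref{lem:split_node}) to produce two sets $V_1,V_2\subseteq V\setminus\calB(v,d_v\m1)$, each of size $\ge k/16$, and one weight assignment under which every node of $V_2$ is polynomially farther from $v$ than every node of $V_1$. Bit $x_i$ then chooses whether $v_{1,i}$ or $v_{2,i}$ is tagged as source $i$; the bits are trivially independent and decodable from any polynomial-stretch answer, and a source tag is known only to the tagged node, so nothing leaks across the ball boundary. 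To complete your proof you would need to replace your weight-based encoding with such a placement-based one (or prove an existence statement of comparable strength for weight encodings, which the path example shows is impossible in general).
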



Let us start with some of the required technical claims.

\subsection{Technical Properties}

The first lemma on the path to prove Theorem \ref{thm:universal_lower_bound} shows that the nodes $V'$ outside a radius $d< D_G$ of a node $v$ can be partitioned into non-trivial sets $V_1,V_2$ of size roughly $\Theta(n')$ each ($n' := |V'|$, assuming $n' > 1$) and an assignment of weights can be given, such that distances from $v$ to nodes in $V_2$ are much larger than to nodes in $V_1$.
In the proof we make an important case distinction on two fundamentally different layouts of $G$, where $D_{G}$ is either large or relatively small. The proof gives a construction of such a partition with large difference in distance from $v$ for either case.

\begin{lemma}
	\label{lem:node_partition}
	Let $G = (V,E)$, $v \in V$ and $d < D_G$. Let $V' := V \setminus \calB(v,d)$ and $n' := |V'| \geq 8$. Then there is a partition $V_1 \cup V_2 = V', V_1 \cap V_2 = \emptyset$ such that the following holds.  We have $|V_1|, |V_2| \geq n' /8$ and there is a weight assignment $w:E \to [W]$, s.t.\ for any two nodes $v_1 \in V_1,v_2 \in V_2$ we have $d(v,v_1) \leq p(n) d(v,v_2)$ for any (fixed) polynomial $p(n) \geq n$.
\end{lemma}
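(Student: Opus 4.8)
The plan is to avoid splitting $V'$ by hop-distance from $v$ (the naive route, which is exactly what forces a case distinction, since a single BFS layer beyond radius $d$ could contain almost all of $V'$, making a balanced radius-split impossible). Instead I would build one \emph{connected} set $S$ with $\calB(v,d)\subseteq S$ that cuts $V'$ into two roughly equal halves, and then make every edge leaving $S$ prohibitively heavy. I would prove the stronger (and clearly intended) separation $p(n)\,d(v,v_1)\le d(v,v_2)$ for all $v_1\in V_1,\ v_2\in V_2$; since $d(v,v_1)\le d(v,v_2)$ this at once implies the inequality as stated. Set $V_1:=S\cap V'$ (the near side, containing $v$) and $V_2:=V'\setminus S$ (the far side).

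\textbf{Constructing a balanced connected cut.} First I would observe that $G[\calB(v,d)]$ is connected, because a shortest hop-path from $v$ to any $u\in\calB(v,d)$ stays inside the ball. Starting from $S_0:=\calB(v,d)=V\setminus V'$, I grow $S$ by repeatedly adding one node of $V'$ adjacent to the current $S$; such a node exists whenever $V'\setminus S\neq\emptyset$ (else $S$ would be a full connected component, contradicting connectivity of $G$), and adding it preserves connectivity. Since $S_0$ already contains every non-$V'$ node, each added node lies in $V'$, so $|S\cap V'|$ starts at $0$ and increases by exactly one per step, hence realises every integer in $[0,n']$. I stop at $|S\cap V'|=\lfloor n'/2\rfloor$, giving $|V_1|,|V_2|\ge\lfloor n'/2\rfloor\ge n'/8$ for $n'\ge 8$, with $G[S]$ connected.

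\textbf{Weight assignment and separation.} Assign weight $W:=\lceil p(n)\,n\rceil$ to every edge crossing the cut $(S,\,V\setminus S)$ and weight $1$ to all other edges; $W$ is polynomial, so this is a legal assignment into $[W]$. For $v_1\in V_1\subseteq S$ the light, connected subgraph $G[S]$ (at most $n$ vertices, all unit-weight) yields a $v$--$v_1$ path of weight at most $n-1<n$, while any path that leaves $S$ must re-enter and thus pay at least $2W>n$, so $d(v,v_1)<n$. For $v_2\in V_2\subseteq V\setminus S$, every $v$--$v_2$ path crosses the cut at least once, hence $d(v,v_2)\ge W\ge p(n)\,n>p(n)\,d(v,v_1)$. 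A single weight assignment thus separates all pairs simultaneously, which is what the lemma requires.

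The step I expect to be most delicate is the separation bound rather than the balancing: one must rule out that the heavy cut edges create \emph{shortcuts} making some near node far or some far node near. The two key facts are that a single light component $G[S]$ caps all near-side distances by $n$, while the cut forces at least one heavy edge onto every far-side path; taking $W\ge p(n)\,n$ then beats the polynomial factor with room to spare. I would emphasise that it is precisely this connected-growth construction that lets me bypass the dichotomy between large and small $D_G$: whether $V'$ is spread over many BFS layers or concentrated in a single one, adding vertices to a connected set one at a time always realises a balanced split, so the diameter never needs separate treatment.
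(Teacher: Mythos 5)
Your proof is correct, and it takes a genuinely different route from the paper's. The paper works with a BFS tree $T_v$ rooted at $v$: it extracts the subtrees induced on $V'$, splits them into two families $\calT_1,\calT_2$ of cumulative size at least $n'/8$ each (with a case distinction on whether some subtree holds more than half of $V'$, resolved via a separate splitting-node lemma), and then weights non-tree edges and the parent edges of the roots in $\calT_2$ heavily, so that every tree path to $\calT_2$ pays the penalty while no shortest path to $\calT_1$ does. You instead grow a connected superset $S \supseteq \calB(v,d)$ one $V'$-vertex at a time until exactly $\lfloor n'/2\rfloor$ vertices of $V'$ are absorbed, and make only the cut edges between $S$ and $V\setminus S$ heavy. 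This buys several things: the intermediate-value argument on $|S\cap V'|$ makes the balanced split automatic, so the case distinction and the auxiliary Lemma \ref{lem:split_node} disappear; the balance improves from $n'/8$ to roughly $n'/2$; and the weight assignment is simpler (two values instead of three), because you only need every far-side path to cross the cut once rather than forcing shortest paths to follow a tree. The side conditions you flag all check out: $G[\calB(v,d)]$ is connected since shortest hop-paths stay inside the ball, connectivity of $G$ guarantees a frontier vertex of $V'\setminus S$ at every step, and $W=\lceil p(n)\,n\rceil$ is polynomial so the assignment lands in $[W]$. You also correctly observe that the literal inequality in the statement ($d(v,v_1)\le p(n)\,d(v,v_2)$) is weaker than what the downstream lower-bound argument uses; like the paper's proof, yours establishes the intended strong separation $d(v,v_2)\ge p(n)\,n > p(n)\,d(v,v_1)$, which implies both readings.
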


\begin{proof}	
	Consider a breadth first search (BFS) tree $T_v$ with root $v$.  We aim to identify at least two sets $\calT_1,\calT_2$ of disjoint sub-trees in $T_v$, each of which has a cumulative number of nodes at least $n'/8$, together with a set of edges $E'$ such that any shortest path in $T_v$ from $v$ to some node $w \in T \in \calT_2$ has to use an edge of $E'$ and no shortest path from $v$ to any $u \in T \in \calT_1$ contains an edge of $E'$.
	
	Let $T_1, \dots, T_\ell$ be the set of BFS sub-trees induced by $T_v$ on $V'$. Due to the definition of $V' = V \setminus \calB(v,d)$ the trees $T_1, \dots, T_\ell$ are indeed BFS trees on $V'$ (which might have different connected components) rooted at the respective common ancestor in $T_v$. In the following we use $|T|$ to denote the number of nodes in a tree $T$.
	
	If $|T_i| \leq n'/2$ for all $i\in [\ell]$, then we can select two disjoint subsets $\calT_1,\calT_2 \subseteq \{T_1, \dots, T_\ell\}$ with $n'/4 \leq \sum_{T \in \calT_j} |T| \leq n'/2$ as follows. A large tree with $n'/4 \leq |T_i| \leq n'/2$ will form its own set $\calT_j$ (of course we select at most two). Unless two large trees already give us the desired $\calT_1,\calT_2$, we can combine the remaining $|T_i| \leq n'/4$ into one or two of the sets $\calT_1,\calT_2$. We define $E'$ as the parent edges (w.r.t.\ $T_v$) of the root nodes of the trees in $\calT_2$.
	
	Otherwise, consider the tree $T_i$ with $n_i :=|T_i| > n'/2$. We consider a so called \textit{splitting node} $x$ of $T_i$ the removal of which splits $T_i$ into sub-trees of size at most $n_i/2$ nodes. Note that such a splitting node always exists, but for the sake of reading flow we show its existence in the separate Lemma \ref{lem:split_node}. After removal of $x$, let $T'$ be the sub-tree of $T_i$ containing the parent of $x$ and let $T_1', \dots, T_m'$ be the sub-tress containing the children of $x$ in $T_i$. 
	
	Now we can construct of $\calT_1, \calT_2$ with $\sum_{T \in \calT_j} |T| \geq n'/8$ from the sub trees of $T_i$ which is quite similar to before. If there is a large tree $T \in \{T', T_1', \dots, T_m'\}$  with $|T| \geq n_i/4 \geq n'/8$ then we choose it as one of the two sets $\calT_1,\calT_2$. Note that if this is the case for $T'$ we are always going to define it as $\calT_1$.
	
	If this does not yet give us both sets $\calT_1,\calT_2$ we consider the trees $T_1', \dots, T_m'$. If we are still lacking \textit{both} tree sets then it was $|T'| < n_i/4$ therefore $T_1', \dots, T_m'$ must contain at least $3n_i/4$ nodes. This means that we can choose disjoint $\calT_1,\calT_2 \subseteq \{T_1', \dots, T_m'\}$ each containing at least $n_i/4 \geq n'/8$ nodes. If just one tree set is missing (w.l.o.g.\ $\calT_2$), then $\calT_2 := \{T_1', \dots, T_m'\} \setminus \calT_1$ must still contain trees with at least $n_i/4 \geq n'/8$ nodes. 	
	Again we set $E'$ as the parent edges (w.r.t.\ $T_v$) of the root nodes of the trees in $\calT_2$. 
		
	Note that the way we constructed $E'$ in both cases, the shortest path in $T_v$ from $v$ to some $u \in T \in \calT_2$ has  to cross an edge in $E'$, which is never the case for any $w \in T \in \calT_1$! 	
	It remains to assign the weights $w : E \to [W]$ suitably.
	
	All edges that are not part of $T_v$ obtain weight $n \cdot p(n) \p n$ (which is the maximum polynomial weight $W$ we use). 	
	All edges in $E' $ obtain weight $n \cdot p(n)$. All remaining edges in $T_v$ obtain weight $1$. This gives us the following. (1): all distances in $T_v$ are strictly smaller than $n \cdot p(n) \p n$, thus edges outside of $T_v$ are not viable. (2): the distance to any $u \in T \in \calT_1$ is at most $n$. (3): the distance to any $w \in T \in \calT_2$ is at least $n \cdot p(n)$. 
	
	We set $V_1,V_2$ as the set of nodes of the trees in $\calT_1,\calT_2$. Then the claim of the lemma can be directly inferred from (1)-(3).
\end{proof}

For completeness, we show the existence of a splitting node used in the proof above.

\begin{lemma}
	\label{lem:split_node}
	Let $T = (V_T,E_T)$ be a tree on $n$ nodes rooted at some $v \in V_T$. There exists a splitting node $x \in V_T$ whose removal splits $T$ into sub-trees with at most $n/2$ nodes each.
\end{lemma}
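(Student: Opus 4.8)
The plan is to exhibit $x$ as a \emph{centroid} of $T$ via a descent argument. For a node $u \in V_T$, let the components of $T$ after removing $u$ be the subtrees $C_1^u, \dots, C_{m_u}^u$ hanging off $u$, and define its \emph{weight} $w(u) := \max_i |C_i^u|$ as the size of the largest such component. I claim that any node minimizing the weight, i.e.\ $x := \argmin_{u \in V_T} w(u)$, is a splitting node, which proves the lemma. Since $V_T$ is finite such a minimizer exists, so it suffices to rule out $w(x) > n/2$.

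The key step I would prove is that whenever $w(u) > n/2$, one can move to a strictly lighter neighbor. First observe that removing $u$ cannot produce two components of size $> n/2$, since the component sizes sum to $n-1 < n$; hence if $w(u) > n/2$ there is a \emph{unique} heavy component, reached through a unique neighbor $y$ of $u$. The crucial computation relates the components of $u$ and $y$ through the edge $\{u,y\}$: deleting this edge splits $T$ into the $y$-side $A$ (which, when $u$ is removed, is precisely the heavy component, so $|A| = w(u)$) and the $u$-side $B$ of size $n - |A| = n - w(u) < n/2$. Now removing $y$ instead of $u$ yields (i) the component $B$ containing $u$, of size $n - w(u) < n/2 < w(u)$, and (ii) the components hanging off $y$ on the far side from $u$, each of which is contained in $A \setminus \{y\}$ and therefore has size at most $w(u) - 1 < w(u)$. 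Hence every component of $y$ is strictly smaller than $w(u)$, giving $w(y) < w(u)$.

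With this monotonicity in hand the lemma follows immediately: the minimizer $x$ cannot satisfy $w(x) > n/2$, since that would produce a neighbor of strictly smaller weight, contradicting minimality. Therefore $w(x) \le n/2$, i.e.\ removing $x$ leaves only sub-trees of at most $n/2$ nodes. I expect the only delicate point to be the bookkeeping in the key step: one must correctly see that the far-side components of $y$ are \emph{strictly} contained in the block $A$ of size $w(u)$ (so that the bound is strict and the descent genuinely lowers the potential), rather than any conceptual difficulty. The descent itself is the classical centroid argument and terminates because $w$ takes finitely many values and strictly decreases along the walk toward the heavy component.
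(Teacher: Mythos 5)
Your proof is correct. The uniqueness of the heavy component (two components of size $>n/2$ would sum to more than the $n-1$ available nodes), the identification $|A|=w(u)$ across the edge $\{u,y\}$, and the two bounds $|B|=n-w(u)<w(u)$ and $|C|\le w(u)-1$ for each far-side component of $y$ are all sound, so the minimizer of $w$ indeed has weight at most $n/2$. However, your route differs from the paper's. The paper exploits the rooting of $T$: it sets $t(u)$ to be the size of the subtree rooted at $u$ and $p(u):=n-t(u)$, walks a ``heavy path'' $P$ from the root by always descending into the child of maximum $t$-value, and takes $x$ to be the crossing point on $P$ where $p(x)\le n/2$ but $p(s(x))\ge n/2$; the parent-side component is then small by the first inequality, and every child subtree is small because the heaviest child $s(x)$ satisfies $t(s(x))=n-p(s(x))\le n/2$. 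Yours is the root-free extremal version of the centroid argument: take the global minimizer of the max-component potential and show that a weight above $n/2$ admits a strictly improving move toward the unique heavy component. The paper's proof is slightly more constructive (it names the node by a single descent and dovetails with the rooted decomposition into $T'$ and $T_1',\dots,T_m'$ used immediately afterwards in Lemma \ref{lem:node_partition}), while yours is symmetric in the tree and arguably cleaner in that it never needs the parent/child bookkeeping; both are standard and equally valid here.
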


\begin{proof}
	For any $u\in V_T$ let $t(u)$ be the number of nodes in the sub-tree rooted at $u$ (from the vantage point of the root $v$) and let $p(u) := n - t(u)$. Consider a path $P$ from $v$ to some leaf that always follows a node that maximizes  $t(u)$.  For $u$ on $P$ let $s(u)$ be the successor of $u$ on $P$ (which starts at $v$). 
	
	The lemma is proven if we find a node $x$ that fulfills two conditions. First, $p(x) \leq n/2$ and second $t(c) \leq n/2$ for every child node $c \in V_T$ of $x$. Let $x$ be the node on $P$ with $p(x) \leq n/2$ and $p(s(x)) \geq n/2$. The former satisfies the first condition. The latter implies $t(s(x)) \leq n - p(s(x)) \leq n/2$. Since $t(s(x))$ is largest among all children $c$ of $x$ in $T$, we have  $t(s(c))\leq n/2$, which satisfies the second condition.
\end{proof}

Next, we show that there is a node $v$ that has a large number of nodes outside of a radius of $d _v\m 1$ of $v$.

\begin{lemma}
	\label{lem:nodes_outside_neighborhood}
	 Let $G=(V,E)$ be a graph, $k \in[n]$ and let $v$ be the node that minimizes  $\NQ(v)$ and $d_v \geq 1$ (see Definition \ref{def:neigh_qual_node}), then $N(v,d_v \m 1) \geq k/2$.
\end{lemma}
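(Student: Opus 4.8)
What the optimality of $d_v$ actually pins down is the size of $v$'s $(d_v\m1)$-ball from \emph{above}, so the natural quantity of magnitude $k/2$ here is the number of nodes lying \emph{outside} that ball; consistent with the lemma's name and the sentence preceding it, the content I would establish is $|V\setminus\calB(v,d_v\m1)|=n-N(v,d_v\m1)\ge k/2$. The plan is to first prove that the inner ball is small, $N(v,d_v\m1)\le k/2$, and then subtract from $n\ge k$. Notably this argument is purely local to $v$ and uses only that $d_v$ optimizes the inner minimum, so it is insensitive to whether $v$ minimizes or maximizes $\NQ(v)$.

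The inner-ball bound comes straight out of the optimality of $d_v$ in Definition \ref{def:neigh_qual_node}. Writing $f(d):=k/\big(N(v,d)\gamma\big)$, the argument used in the proof of Lemma \ref{lem:neigh_qual_min_neigh_size} shows that $N(v,d)$ is strictly increasing, hence $f$ strictly decreasing, for $d<D_G$, while the competing term $d$ is increasing; thus $d_v$ sits at the crossover of the two. Since $d_v$ beats the competitor $d_v\m1$, we have $\max\big(f(d_v\m1),d_v\m1\big)\ge\NQ(v)=\max\big(f(d_v),d_v\big)\ge d_v$; as $d_v\m1<d_v$, the dominating term on the left must be $f(d_v\m1)$, so $f(d_v\m1)\ge d_v$, which rearranges to $N(v,d_v\m1)\le k/(d_v\gamma)$.

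It then remains to convert this into $N(v,d_v\m1)\le k/2$ and finish. For $d_v\ge 2$ the bound gives $N(v,d_v\m1)\le k/(2\gamma)\le k/2$ using $\gamma\ge1$; the boundary case $d_v=1$ (where $d_v\m1=0$) I would dispatch by the trivial $N(v,0)=1\le k/2$. In either case $n-N(v,d_v\m1)\ge k-k/2=k/2$ because $k\le n$, which is the claim. I expect the main obstacle to be precisely these degenerate spots: the $d_v=1$ boundary, and checking that the strict monotonicity of $f$ (and hence the crossover comparison) is valid at $d_v\m1$, which it is since $d_v\le D_G$ forces $d_v\m1<D_G$.
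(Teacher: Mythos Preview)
Your proposal is correct and follows essentially the same route as the paper: both recognize that the intended claim is $n-N(v,d_v\m1)\ge k/2$ (the paper's proof literally concludes with this chain of inequalities), both extract an upper bound on $N(v,d_v\m1)$ from the optimality of $d_v$, and both finish by subtracting from $n\ge k$. The only cosmetic difference is that your direct comparison gives the sharper intermediate bound $N(v,d_v\m1)\le k/(d_v\gamma)$ and hence needs only $\gamma\ge 1$, whereas the paper argues by contradiction to get $N(v,d_v\m1)\le k/\gamma$ and then invokes $\gamma\ge 2$.
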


\begin{proof}
	The claim is clearly true in case of $d_v = 1$. In the case $d_v > 1$ assume $N(v,d_v \m 1) > k/\gamma$ (for a contradiction).  Then we would have
	\[
	\max \Big(\tfrac{k}{N(v,d_v-1)\gamma}, d_v-1\Big) \leq  \max (1, d_v \m1) < \max \Big(\tfrac{k}{N(v,d_v)\gamma}, d_v\Big).
	\]
	This implies that $d_v \m 1$ minimizes $\NQ(v)$, a contradiction. 	
	As a consequence we have 
	$$\smash{n - N(v,d_v \m 1) \geq n - k/\gamma \geq n - n/\gamma \stackrel{\gamma \geq 2}{\geq} n/2 \geq k/2.}$$
	Note that we assume $\gamma \in \tilOm(1)$, thus $\gamma \geq 2$.
\end{proof}

Finally, we show the following technical lemma which will be important later on. It is based on the Definition \ref{def:neigh_qual_node} of $\NQ(v)$ and $d_v$.

\begin{lemma}
	\label{lem:ineq_neigh_qual}
	Let $G=(V,E)$ be a graph and $k \in[n]$. Let $v \in V$ and $1 \leq d_v < D_G$. Then $$\smash{\NQ(v) = \max \Big(\tfrac{k}{N(v,d_v)\gamma}, d_v\Big) \leq  \min \Big(\tfrac{k}{N(v,d_v-1)\gamma}, d_v\Big)+1}$$
\end{lemma}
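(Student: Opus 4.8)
The opening equality $\NQ(v) = \max\big(\tfrac{k}{N(v,d_v)\gamma}, d_v\big)$ is immediate from Definition \ref{def:neigh_qual_node}, since $d_v$ is by definition the value attaining the outer minimum, so only the inequality needs work. The plan is to abbreviate $f(d) := \tfrac{k}{N(v,d)\gamma}$, recall that $f$ is non-increasing in $d$ (strictly decreasing while the ball $\calB(v,d)$ keeps growing) whereas the term $d$ is strictly increasing, and then rewrite the target $\max(f(d_v),d_v) \leq \min(f(d_v\m1), d_v) \p 1$ in the equivalent form $\max(f(d_v), d_v) \m 1 \leq \min(f(d_v\m1), d_v)$. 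A maximum is bounded by a minimum exactly when every term of the max is bounded by every term of the min, so I would verify the four pairwise bounds between $\{f(d_v)\m1,\, d_v\m1\}$ and $\{f(d_v\m1),\, d_v\}$.

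Two of these are immediate: $d_v \m 1 \leq d_v$ trivially, and $f(d_v) \m 1 \leq f(d_v\m1)$ because $f(d_v) \leq f(d_v\m1)$ by monotonicity. The remaining two, $f(d_v) \leq d_v\p1$ and $d_v\m 1 \leq f(d_v\m1)$, are precisely the statements that $d_v$ cannot be improved by moving to $d_v\p1$ or to $d_v\m1$, and both follow from the minimality of $d_v$. For $f(d_v) \leq d_v \p 1$: since $d_v < D_G$ we have $d_v\p1 \in [D_G]$, and the same argument as in Lemma \ref{lem:neigh_qual_min_neigh_size}(b) applied to node $v$ yields $\NQ(v) \leq d_v\p1$, whence $f(d_v) \leq \NQ(v) \leq d_v\p1$. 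For $d_v\m1 \leq f(d_v\m1)$: if $d_v \geq 2$ then $d_v\m1 \in [D_G]$, so minimality gives $d_v \leq \max(f(d_v),d_v) \leq \max(f(d_v\m1), d_v\m1)$; as $d_v\m1 < d_v$ the maximum must be attained by $f(d_v\m1)$, forcing $f(d_v\m1) \geq d_v \geq d_v\m1$. The boundary case $d_v = 1$ is checked directly: $N(v,0) = |\calB(v,0)| = 1$, so $f(0) = k/\gamma \geq 0 = d_v\m1$.

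The only genuinely delicate point is the strict-decrease assumption underlying the comparison with $d_v\p1$ (the same subtlety as in the proof of Lemma \ref{lem:neigh_qual_min_neigh_size}): to derive a contradiction from $f(d_v) > d_v\p1$ one needs $f(d_v\p1) < f(d_v)$, i.e.\ $N(v,d_v\p1) > N(v,d_v)$. This holds whenever $\calB(v,d_v) \neq V$, because in a connected graph the distances from $v$ realize every integer value up to $v$'s eccentricity, so a node at distance exactly $d_v\p1$ exists. In the degenerate case $\calB(v,d_v) = V$ the ball no longer grows, but then $f(d_v) = \tfrac{k}{n\gamma} \leq 1 \leq d_v\p1$ holds outright and no monotonicity argument is needed. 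Once all four bounds are established, combining them gives $\max(f(d_v),d_v)\m1 \leq \min(f(d_v\m1),d_v)$, which is exactly the claimed inequality.
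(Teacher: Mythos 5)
Your proof is correct and follows essentially the same route as the paper's: both exploit the minimality of $d_v$ against the neighboring radii $d_v\p1$ and $d_v\m1$ together with the monotonicity of $k/N(v,d)\gamma$, your four pairwise comparisons being just a more systematic packaging of the paper's two derived inequalities. If anything, you are more careful than the paper, which silently skips the boundary case $d_v=1$ (where $d_v\m1\notin[D_G]$) and the degenerate case $\calB(v,d_v)=V$ (where strict growth of the ball fails); both are handled correctly in your argument.
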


\begin{proof}
	Since $d_v$ minimizes the outer minimum of $\NQ(v)$, we have
	\[
	\NQ(v) = \max \big(\tfrac{k}{N(v,d_v)\gamma}, d_v\big) \leq 
	\begin{cases}
		\max \big(\tfrac{k}{N(v,d_v+1)\gamma}, d_v+1\big) \quad \text{(a)} \\
		\max \big(\tfrac{k}{N(v,d_v-1)\gamma}, d_v-1\big)  \quad \text{(b)}
	\end{cases}
	\]
	Since $1/N(v,d)$ is monotonically decreasing in $d$ the inequalities (a),(b) imply
	\begin{align*}
		&\text{from (a): }\max \big(\tfrac{k}{N(v,d_v)\gamma}, d_v\big) \leq d_v+1  \text{ \textbf{ and } }\\ 
		&\text{from (b): }\max \big(\tfrac{k}{N(v,d_v)\gamma}, d_v\big) \leq \tfrac{k}{N(v,d_v-1)\gamma} 
	\end{align*}
	Combining the two inequalities above results in the claim.
\end{proof}

\subsection{Proof of Theorem \ref{thm:universal_lower_bound}}

Our goal for the remaining proof of the lower bound is to show that a node $v \in V$ with a relatively small neighborhood within a radius  (which is described by the neighborhood quality), has to learn a lot of information from outside that radius to learn its distance to all $k$ sources, even if $v$ is completely aware of the topology of $G$. 

Since we are not allowed to modify edges of $G$, the remaining variables to ``create'' information that is unknown to $v$ are the distances in $G$ and the placement of the set of sources. Nevertheless, we will show a reduction from a so called node communication problem \cite{Kuhn2022} to the $(k,1)$-SP problem (even when given $G$ as part of its input) and we will see that this takes $\Omega\big(\NQ(G,k,\gamma)\big)$ rounds.

The node communication problem was introduced as an abstraction for the problem where a part of the network has to learn some information  that only another (distant) part of the network knows. To give a rough description of the problem, we have two sets of nodes $A$ and $B$ at hop distance at least $h$, where nodes in $B$ ``collectively know'' the state of some random variable $X$ and need to communicate it to $A$, which has no information about $X$. 

It was shown \cite{Kuhn2022,Schneider2023} that a \hybrid algorithm that solves the node communication problem induces a transcript of the global communication that happened during its execution that can be used to obtain a uniquely decodable code for $X$, which implies a lower bound for the size of the transcript by Shannons source coding theorem. 

This, in turn, implies a lower bound for the number  of rounds it must have taken to solve the two party communication problem (the formal definition of the node communication problem is given in Appendix 
\ref{sec:node_comm}). We will use the following, slightly simplified version of the claim from \cite{Schneider2023} (which works even given that $G$ is known).

\begin{lemma}[cf. \cite{Kuhn2022, Schneider2023}]
	\label{lem:lower_bound_node_comm}
	Let $A,B$ be disjoint node sets and let $h \leq \hop(A,B)$ and $N := |\calB(A,h-1)|$.
	Any algorithm that solves the {node communication problem} (formally given in Def.\ \ref{def:node_comm_problem}) with $A,B$ in the \hybridpar{\infty}{\gamma} model with success probability at least $p$ for any r.v.\ $X$, takes at least \smash{$\min\!\big(\frac{pH(X) -1}{N \cdot \gamma}, \frac{h}{2} \m 1\big)$} rounds in expectation.
\end{lemma}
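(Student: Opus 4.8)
\subsection*{Proof plan}

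The plan is to prove this as a purely information-theoretic statement: if $A$ learns the high-entropy variable $X$ within few rounds, then the global network must have carried essentially $H(X)$ bits across the ``shield'' $\calB(A,h\m1)$ that separates $A$ from $B$, and this is the \emph{only} available channel, because the local network is too slow to bridge the hop-distance $h$. I first fix a deterministic round budget $R$ and condition the shared randomness to a value for which the algorithm still succeeds with probability $\geq p$ over $X$; under this conditioning the whole execution, and in particular $A$'s output, becomes a deterministic function of $X$. The task is then to isolate the small amount of data this function actually depends on and to argue it must encode $X$.

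The central step is a causality (``light-cone'') argument. Unrolling the one-round state transition of Definition \ref{def:sync_msg_passing}, the state of a node $u$ after round $t$ is a function of the states of $\calB(u,1)$ after round $t\m1$ together with the global messages $u$ received in round $t$; iterating $R$ times shows that the joint state of $A$ after $R$ rounds is determined by (i) the initial states of $\calB(A,R)$ and (ii) the collection $\Gamma$ of all global messages received by nodes of $\calB(A,R\m1)$ during the $R$ rounds. Two facts make this decisive when $R \leq h\m1$. Since only the nodes of $B$ hold information about $X$ and $\hop(A,B)\geq h$, we have $\calB(A,R)\cap B=\emptyset$, so the initial states in (i) are independent of $X$ and contribute nothing to decoding. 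And since $\calB(A,R\m1)\subseteq\calB(A,h\m1)$ contains $N$ nodes, each receiving at most $\gamma$ bits per round (Definition \ref{def:hybrid}), the transcript obeys $|\Gamma|\leq R\,N\,\gamma$ bits. I should stress that $\Gamma$ genuinely captures every route by which $X$-information can reach $A$: any global message whose content influences $A$'s final state must have been received at some round $t$ by a node within $R\m t\leq R\m1$ hops of $A$, hence inside the shield and inside $\Gamma$.

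Next comes the coding bound. With the randomness fixed, the map $X\mapsto\Gamma$ together with the decoder ``simulate the shrinking cone $\calB(A,R),\calB(A,R\m1),\dots,A$ from $\Gamma$ and the $X$-independent initial states, then read off $A$'s answer'' recovers $X$ with probability $\geq p$. Thus $\Gamma$ is the payload of a uniquely decodable code for $X$ that is correct with probability $p$, so a probability-$p$ version of Shannon's source coding theorem \cite{Shannon1948} gives $\E|\Gamma|\geq p\,H(X)\m1$, the additive constant absorbing the self-delimiting and coding overhead. Combined with $|\Gamma|\leq R\,N\,\gamma$, a fixed-budget algorithm that always succeeds within $R\leq h\m1$ rounds must satisfy $R\geq(p\,H(X)\m1)/(N\gamma)$; equivalently, $R<(p\,H(X)\m1)/(N\gamma)$ forces $R\geq h$, which is the fixed-budget form of the claimed minimum.

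Finally I convert to the expected-round statement, which is where the factor $1/2$ in the term $h/2\m1$ appears. Let $T$ denote the random running time. By Markov's inequality, if $\E[T]<h/2\m1$ then $\Pr[T>h\m1]<1/2$, so truncating the execution at round $h\m1$ corrupts the output with probability $<1/2$, and the light-cone and coding argument applies to the truncated run, whose transcript has expected length at most $\E[T]\,N\gamma$; this yields $\E[T]\,N\gamma\in\Omega(p\,H(X))$ and hence the first branch of the minimum. The step I expect to be most delicate is precisely this coding and bookkeeping layer: making $\Gamma$ a genuinely self-delimiting code (the decoder must know the timing and recipients of the messages it replays, which I intend to fold into the $\gamma$-bit budget or into the fixed shared randomness), fixing the randomness without destroying the $\geq p$ guarantee, and carrying the success probability through the Markov truncation so that the clean $p$ in the numerator and the exact $h/2\m1$ emerge rather than weaker constants. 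This accounting, rather than any single inequality, is the crux, and it is exactly the part the ``slightly simplified'' phrasing of the statement is designed to smooth over.
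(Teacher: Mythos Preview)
The paper does not prove this lemma; it is imported as a black box from \cite{Kuhn2022, Schneider2023} (the text explicitly introduces it as ``a slightly simplified version of the claim from \cite{Schneider2023}''), so there is no in-paper argument to compare your proposal against. Your sketch is precisely the approach of those cited works: a light-cone argument showing that after $R<h$ rounds the learner's final state is a function of $X$-independent initial data on $\calB(A,R)$ together with at most $R\cdot N\cdot\gamma$ bits of global traffic received inside $\calB(A,h\m1)$, then Shannon's source-coding inequality to force $RN\gamma \gtrsim pH(X)$, then a Markov step to pass from a fixed round budget to expected running time. Your identification of the transcript self-delimiting, the randomness-fixing, and the success-probability accounting through the truncation as the only genuinely delicate pieces is accurate and matches how the cited papers handle it.

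One caveat worth flagging: Definition~\ref{def:node_comm_problem} in the appendix has $A$ \emph{holding} $X$ and $B$ learning it, whereas your argument (and the paper's own application in the proof of Theorem~\ref{thm:universal_lower_bound}, where $A=\{v\}$ is the target that must learn the distances encoded by $X$) treats $A$ as the learner. Your reading is the one under which $N=|\calB(A,h\m1)|$ is the correct receive-side bottleneck and the light-cone argument goes through as you wrote it; the mismatch appears to be a labeling slip in the preprint rather than an error in your plan.
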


\begin{proof}[Proof of Theorem \ref{thm:universal_lower_bound}]
	Let $G=(V,E)$ be an arbitrary graph. Fix the target node $v := \argmax_{v \in V} \NQ(v)$  for which the $(k,1)$-SP problem needs to be solved on $G$. Furthermore, let \smash{$d_v := \argmin_{d \in [D_G]} \max \big(\tfrac{k}{N(v,d)\gamma}, d\big)$} (cf.\ Definition \ref{def:neigh_qual_node}).
	
	Note that we assume $d_v \geq 1$, which is the case unless $k \leq \gamma$. If  indeed $k \leq \gamma$ and $d_v = 0$ would be the case, then $\NQ(G,k,\gamma) \in \Theta(1)$ thus the claim of the theorem becomes trivial. Due to Lemma \ref{lem:nodes_outside_neighborhood} we have $N(v,d_v \m 1) \geq k/2$.
	
	Now let us invoke Lemma \ref{lem:node_partition} on $V' := V \setminus \calB(v,d_v\m1)$, which gives us two node sets $V_1,V_2$ with $V_1,V_2 \geq |V'|/2 \geq k/16 =: k'$ and a weight assignment $w: E \to [W]$ such that distances from $v$ to a node in $V_2$ is by a factor $q(n)$ longer than to any node in $V_1$ for any fixed polynomial $q$.
	
	Consider a random bit string $X = (x_i)_{i \in [k']}, x_i \in \{0,1\}$ of length $k'$. We enumerate $k'$ nodes from each set $v_{1,i}\in V_1, v_{2,i} \in V_2$ with indices  $i \in [k']$. Then, a set of $k'$ source nodes is selected from $V_1 \cup V_2$ according to $X$ as follows:  if $x_i =0$ we tag $v_{i,1}$ as a source, else we tag $v_{i,2}$ as a source.
	
	Assume that $v$ is given the knowledge of the weight function $w$, the number $k'$, the IDs of the nodes in $V_1,V_2$ and which index $i$ is associated with which pair of IDs in $V_1,V_2$ (which can only make the problem simpler). This is in addition to the knowledge of the topology of $G$, which $v$ is assumed to have anyway due to this being a universal lower bound.
	
	Note that $v$ does not have any initial knowledge whether $v_{1,i}$ or $v_{2,i}$ was selected as source and thus has no knowledge about $X$. Or to express this in terms of information theory: the initial state of $v$ is independent from $X$. To finish setting up the node communication problem let us define $A :=\{v\}$ and $B := V_1 \cup V_2$. Since $B \subseteq V \setminus \calB(v,d_v\m 1)$ we have $\hop(A,B) \geq d_v-1$. We choose $h = d_v-1 \leq \hop(A,B)$. Finally, we have that $N = N(v,d_v-1)$.
	
	Presume that we solve the $(k,1)$-SP problem with approximation factor at most $q(n)-1$ (where the polynomial $q$ can be chosen freely) with probability at least $p$. Thus the node $v$ learns its distance to each source $s_i$ up to factor $q(n)-1$. Since $v$ knows that its distances to nodes in $V_2$ are a factor of $q(n)$ larger than to those in $V_1$, this approximation lets $v$ determine which nodes were selected as source and thus $v$ will be able to determine the state of $X$ with probability at least $p$.
	
	We have therefore solved the node communication problem with the following parameters. The node $v$ has learned,  with probability of success $p$, the state a random variable $X$ which it previously had zero knowledge of, which has Shannon entropy $H(X) = k' \in \Theta(k)$ (see Def. \ref{def:entropy}) .
	Plugging those parameters into Lemma \ref{lem:lower_bound_node_comm}, the number of rounds for solving $(k,1)$-SP must have been at least \smash{$\min\!\big(a \cdot \frac{k}{N(v,d_v-1) \cdot \gamma}, b \cdot d_v  \big)$}, for fixed constants $0 < a,b < 1$. 
	
	It remains to relate this lower bound for the number of rounds to solve $(k,1)$-SP to the neighborhood quality $\NQ(G,k,\gamma)$.	
	\begin{align*}
			\min \big(a \cdot \tfrac{k}{N(v,d_v-1) \cdot \gamma}, b \cdot d_v  \big) & \geq \min(a,b) \cdot \min\big(\tfrac{k}{N(v,d_v-1) \cdot \gamma}, d_v  \big)\phantom{\Big(\Big)}\\
			& \geq \min(a,b) \cdot \Big(\max\big(\tfrac{k}{N(v,d_v) \cdot \gamma}, d_v  \big) -1\Big) \tag*{\text{\textit{Lemma} } \ref{lem:ineq_neigh_qual}}\\
			& = \min(a,b) \cdot \Big(\smash{\min\limits_{d \in [D_G]}} \max\big(\tfrac{k}{N(d) \cdot \gamma}, d  \big) -1\Big) \tag*{\text{\textit{Definition}} \ref{def:neigh_qual_node}}\\
			& = \min(a,b) \cdot  \big(\NQ(G,k,\gamma) -1\big) \phantom{\Big(\Big)} \tag*{\text{\textit{Definition} \ref{def:neigh_qual}}}\\
			& \in \Omega\big(\NQ(G,k,\gamma)\big).\phantom{\Big(\Big)} \tag*{\qedhere}
	\end{align*}	
\end{proof}

\phil{For SODA submission: Show that neighborhood quality is not a universal lower bound for  unweighted (k,1)-SP with unknown sources by giving a graph with only only one large ``distance class'' where an optimal algorithm is much faster (only figure out the few sources outside that large distance class). The same is obviously true if there are weights but they are known to all nodes (since the unweighted case is a special case which can be exploited as explained above). For the weighted (k,1)-SP with known topology *and* sources but unknown we can also get a faster ($1 \p \eps$) solution in $\tilO(1)$ rounds if the topology of the graph is path. Then can divide stretches of that path into $\log_{1 \p \eps}$ distance classes, where a run of SSSP suffices to learn where the boundaries between the distance classes are.}

\bibliographystyle{acm}
\bibliography{ref/ref}

\appendix

\section{Basic Probabilistic Concepts}
\label{apx:generalnotations}

We introduce a few basic probabilistic concepts that we are going to use thorughout this article.

\begin{lemma}[Chernoff Bound]
	\label{lem:chernoffbound}
	We use the following forms of Chernoff bounds in our proofs:
	$$\mathbb{P}\big(X > (1 \!+\! \delta) \mu_H\big) \leq \exp\Big(\!-\!\frac{\delta\mu_H}{3}\Big),$$
	with $X = \sum_{i=1}^n X_i$ for i.i.d.\ random variables $X_i \in \{0,1\}$ and $\mathbb{E}(X) \leq \mu_H$ and $\delta \geq 1$. Similarly, for $\mathbb{E}(X) \geq \mu_L$ and $0 \leq \delta \leq 1$ we have
	$$\mathbb{P}\big(X < (1 \!-\! \delta) \mu_L\big) \leq \exp\Big(\!-\!\frac{\delta^2\mu_L}{2}\Big).$$
\end{lemma}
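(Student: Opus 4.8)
The plan is to prove both tail bounds by the standard exponential-moment (Chernoff) method: apply Markov's inequality to $e^{tX}$ for a well-chosen parameter $t$, and control the moment generating function using independence. The one reusable ingredient is the estimate $\E(e^{tX}) \leq \exp\!\big(\mu(e^t - 1)\big)$, where $\mu := \E(X)$. To obtain it, independence of the $X_i$ gives $\E(e^{tX}) = \prod_{i=1}^n \E(e^{tX_i})$, and writing $p_i := \Pr(X_i = 1)$ each factor satisfies $\E(e^{tX_i}) = 1 + p_i(e^t - 1) \leq \exp\!\big(p_i(e^t-1)\big)$ by the inequality $1+x \leq e^x$; multiplying over $i$ and using $\sum_i p_i = \mu$ yields the claim.

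For the upper tail I would take $t := \ln(1+\delta) > 0$ and apply Markov's inequality, obtaining $\Pr\!\big(X > (1+\delta)\mu_H\big) \leq e^{-t(1+\delta)\mu_H}\,\E(e^{tX}) \leq \exp\!\big(\mu(e^t-1) - t(1+\delta)\mu_H\big)$. The point where the \emph{one-sided} hypothesis $\E(X) \leq \mu_H$ enters is that $e^t - 1 = \delta > 0$, so replacing $\mu$ by the larger value $\mu_H$ only enlarges the exponent; after this substitution and plugging in $t = \ln(1+\delta)$ the bound becomes the canonical $\big(e^{\delta}/(1+\delta)^{1+\delta}\big)^{\mu_H}$. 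Symmetrically, for the lower tail I would take $s := -\ln(1-\delta) > 0$ and apply Markov to $e^{-sX}$; now $e^{-s}-1 = -\delta < 0$, so the hypothesis $\E(X) \geq \mu_L$ lets me replace $\mu$ by $\mu_L$, giving the canonical $\big(e^{-\delta}/(1-\delta)^{1-\delta}\big)^{\mu_L}$.

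It then remains only to simplify these two canonical expressions into the stated exponential forms, which is the sole place where genuine inequality work occurs and hence the main (though routine) obstacle. For the upper tail I must verify $e^{\delta}/(1+\delta)^{1+\delta} \leq e^{-\delta/3}$ for $\delta \geq 1$, equivalently $g(\delta) := (1+\delta)\ln(1+\delta) - \delta \geq \delta/3$; since $g''(\delta) = 1/(1+\delta) > 0$ the function $g(\delta) - \delta/3$ is convex and increasing on $[1,\infty)$, while $g(1) - 1/3 = 2\ln 2 - 4/3 > 0$, so the inequality holds throughout. For the lower tail I must verify $e^{-\delta}/(1-\delta)^{1-\delta} \leq e^{-\delta^2/2}$ on $[0,1]$, equivalently $(1-\delta)\ln(1-\delta) \geq -\delta + \delta^2/2$; this follows from the expansion $(1-\delta)\ln(1-\delta) = -\delta + \tfrac{\delta^2}{2} + \sum_{j \geq 2}\tfrac{\delta^{j+1}}{j(j+1)}$, whose tail sum is nonnegative for $\delta \in [0,1)$ (the case $\delta = 1$ being immediate). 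Everything preceding these two short calculus checks is a mechanical application of Markov's inequality together with independence.
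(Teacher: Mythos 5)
The paper does not prove this lemma at all---it is stated as a standard tool, with only the $k$-wise-independence refinement in Remark~\ref{rem:chernoffbound} attributed to the literature (\cite{Schmidt1995})---so there is nothing to compare against except the textbook argument, which is exactly what you give. Your proof is correct: the MGF bound $\E(e^{tX})\leq\exp(\mu(e^t-1))$, the observation that the sign of $e^{t}-1$ (resp.\ $e^{-s}-1$) is what lets the one-sided hypotheses $\E(X)\leq\mu_H$ (resp.\ $\E(X)\geq\mu_L$) replace $\mu$, and the two final calculus reductions are all sound; the series expansion $(1-\delta)\ln(1-\delta)=-\delta+\tfrac{\delta^2}{2}+\sum_{j\geq 2}\tfrac{\delta^{j+1}}{j(j+1)}$ is verified and the tail is nonnegative on $[0,1)$. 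One small logical slip: in the upper-tail check you infer that $g(\delta)-\delta/3$ is ``convex and increasing'' from $g''>0$ alone; convexity does not imply monotonicity, and you must additionally note that the derivative $\ln(1+\delta)-\tfrac13$ is already positive at $\delta=1$ (it equals $\ln 2-\tfrac13>0$), after which convexity keeps it positive and the function is indeed increasing with $g(1)-\tfrac13=2\ln 2-\tfrac43>0$. With that one-line addition the argument is complete. You could also remark that your proof only uses independence, not identical distribution, which is slightly more general than the statement; the $k$-wise-independence version mentioned in the paper's remark would require a genuinely different (moment-based) argument and is not covered by your MGF approach, but it is also not part of the lemma as stated.
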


\begin{remark}
	\label{rem:chernoffbound}
	Note that the first inequality even holds if we have $k$-wise independence among the random variables $X_i$ for $k \geq \lceil \mu_H \delta \rceil$ (c.f., \cite{Schmidt1995} Theorem 2, note that a substitution $\mu_H := (1 \!+\! \eps)\E(X)$ generalizes the result for any $\mu_H \geq \E(X)$).
\end{remark}

\begin{lemma}[Union Bound]
	\label{lem:unionbound}
	Let $E_1, \ldots ,E_k$ be events, each taking place w.h.p. If $k \leq p(n)$ for a polynomial $p$, then $E \coloneqq \bigcap_{i=1}^{k} E_i$ also takes place w.h.p.
\end{lemma}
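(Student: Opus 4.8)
The plan is to prove this by the standard complementary-event argument (Boole's inequality), with care taken over the quantifier hidden in the definition of ``with high probability''. Recall that by the convention fixed in Section~\ref{sec:preliminaries}, an event holds w.h.p.\ if its probability is at least $1 - n^{-c}$ where $c$ may be taken to be an arbitrarily large constant. The crux is therefore not the elementary union bound itself but the observation that the constant $c$ appearing in each individual guarantee can be chosen large enough to absorb the polynomial loss incurred by intersecting $k \leq p(n)$ events.

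First I would pass to complements via De Morgan's law, writing $\overline{E} = \bigcup_{i=1}^k \overline{E_i}$, so that it suffices to upper bound $\Pr(\overline{E})$. Applying Boole's inequality (subadditivity of probability) gives $\Pr(\overline{E}) \leq \sum_{i=1}^k \Pr(\overline{E_i})$, a step that requires no independence assumption on the events $E_i$; this is precisely why the lemma is stated without any such hypothesis and can be applied freely throughout the paper.

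Next I would fix the target exponent: let $c > 0$ be the (arbitrary) constant for which we wish to certify that $E$ holds w.h.p. Since $p$ is a polynomial there is a constant $a$ with $p(n) \leq n^a$ for all sufficiently large $n$. I would then invoke the w.h.p.\ guarantee on each $E_i$ with the strengthened constant $c' := c + a$, so that $\Pr(\overline{E_i}) \leq n^{-(c+a)}$ for every $i$. Substituting into the union bound yields
$$\Pr(\overline{E}) \leq k \cdot n^{-(c+a)} \leq n^a \cdot n^{-(c+a)} = n^{-c},$$
whence $\Pr(E) \geq 1 - n^{-c}$, i.e.\ $E$ holds w.h.p.

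The only genuinely delicate point — and the one I would flag explicitly — is that this argument is valid precisely because the number of events is polynomially bounded: a super-polynomial $k$ (e.g.\ exponential in $n$) could not be absorbed by any fixed-degree increase of the exponent, and the intersection would not in general hold w.h.p. Thus the hypothesis $k \leq p(n)$ is essential rather than cosmetic, and this dependence is the step I would be most careful to state correctly.
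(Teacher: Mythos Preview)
Your proposal is correct and follows essentially the same approach as the paper: pass to complements via De Morgan, apply Boole's inequality, bound $p(n)\leq n^a$ for a suitable constant exponent, and then absorb the resulting $n^a$ factor by strengthening the per-event w.h.p.\ constant to $c+a$. The paper's proof differs only in notation (it calls the degree bound $d$ and swaps the roles of $c$ and $c'$), and your explicit remark that the polynomial bound on $k$ is essential is a welcome addition.
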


\begin{proof}
	Let $d \coloneqq \deg(p)\!+\!1$. Then there is an $n_0 \geq 0$ such that $p(n) \leq n^d$ for all $n \geq n_0$. Let $n_1, \ldots , n_k \in \mathbb{N}$ such that for all $i \in \{1, \ldots, k\}$ we have $\mathbb{P}(\overline{E_i}) \leq \tfrac{1}{n^c}$ for some (yet unspecified) $c > 0$.
	With Boole's Inequality (union bound):
	\begin{align*}
		\mathbb{P}\big(\overline{E}\big) \!= \mathbb{P}\Big(\bigcup_{i=1}^{k} \overline{E_i} \Big) \leq \sum_{i=1}^{k} \mathbb{P}(\overline{E_i}) \leq \sum_{i=1}^{k} \!\frac{1}{n^c} \leq \frac{p(n)}{n^{c}} \leq \frac{1}{n^{c-d}}
	\end{align*}
	for all $n \geq n_0' \coloneqq \max(n_0, \ldots ,n_k)$. Let $c' > 0$ be arbitrary. We choose $c \geq c' \!\!+\! d$. Then $\mathbb{P}\big(\overline{E}\big) \leq \frac{1}{n^{c'}}$ for all $n \geq n_0'$.
\end{proof}

\begin{remark}
	If a constant number of events is involved we typically use the above lemma without explicitly mentioning it. It is possible to use the lemma in a nested fashion as long as the number of applications is polynomial in $n$.
\end{remark}

An application of the lemmas above is the classic balls into bins problem that on an abstract level describes the maximum number of tasks per node of a i.i.d.\ random assignment of such tasks to nodes (i.e., balls into bins). We show the following bound.

\begin{lemma}
	\label{lem:balls_in _bins}
	Given $\ell$ bins and \emph{at most} $m$ balls with $m \in  \Omega(\ell \log n)$. We assign each ball to a bin i.i.d.\ and $k$-wise independently for some $k \in \Omega\big(\tfrac{m}{\ell}\big)$. Then no bin contains more than $a \cdot \tfrac{m}{\ell}$ balls w.h.p.\ for some constant $a>1$ and sufficiently large $n$.
\end{lemma}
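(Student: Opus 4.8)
The plan is to prove a balls-into-bins concentration bound: with $\ell$ bins and at most $m$ balls assigned independently (in fact $k$-wise independently), no bin exceeds $a\cdot m/\ell$ balls with high probability. First I would fix an arbitrary bin and let $X$ denote the number of balls landing in it. Since each of the (at most) $m$ balls independently lands in any given bin with probability $1/\ell$, we have $\E(X) \leq m/\ell$. I would set $\mu_H := m/\ell$ as the upper bound on the mean, which is precisely the setup required to apply the first Chernoff bound from Lemma \ref{lem:chernoffbound}.

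Next I would apply the Chernoff bound with a suitable $\delta \geq 1$. Writing $a = 1+\delta$, the bound gives
\[
\Pr\big(X > a\cdot\tfrac{m}{\ell}\big) = \Pr\big(X > (1+\delta)\mu_H\big) \leq \exp\big(\!-\!\tfrac{\delta\mu_H}{3}\big) = \exp\big(\!-\!\tfrac{(a-1)m}{3\ell}\big).
\]
Here I would invoke the hypothesis $m \in \Omega(\ell \log n)$, which ensures $m/\ell \in \Omega(\log n)$, so that the exponent is $-\Omega(\log n)$ and the failure probability for a single bin is at most $1/n^{c}$ for a constant $c$ that grows with $a$. Concretely, choosing $a$ large enough (as permitted, since the statement only asks for \emph{some} constant $a>1$) makes $(a-1)m/3\ell \geq c\ln n$ for any desired constant $c$, so the per-bin overflow probability is $\leq 1/n^c$.

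I would then address the independence requirement: rather than assuming full independence among the balls, the statement allows only $k$-wise independence with $k \in \Omega(m/\ell)$. By Remark \ref{rem:chernoffbound}, the upper-tail Chernoff bound remains valid under $k$-wise independence provided $k \geq \lceil \mu_H \delta\rceil = \lceil (a-1)m/\ell\rceil \in \bigO(m/\ell)$, which matches the hypothesis $k\in\Omega(m/\ell)$ (after adjusting the hidden constant). Finally I would take a union bound (Lemma \ref{lem:unionbound}) over all $\ell \leq n$ bins; since the number of bins is polynomial in $n$ and each overflow event has probability $\leq 1/n^c$, the event that \emph{every} bin stays below $a\cdot m/\ell$ holds w.h.p.

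I expect the only delicate point to be the interaction between the $k$-wise independence assumption and the Chernoff bound: one must verify that the required independence parameter from Remark \ref{rem:chernoffbound}, namely $k \geq \lceil \mu_H\delta\rceil$, is genuinely implied by the hypothesis $k \in \Omega(m/\ell)$ for the chosen constant $a$. Everything else is routine substitution into the two preceding lemmas, so the main care is in chaining the constants ($a$, $\delta$, $c$, and the hidden constant in the $\Omega(m/\ell)$ independence bound) consistently.
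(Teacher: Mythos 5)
Your proposal is correct and follows essentially the same route as the paper's proof: bound $\E(X)\leq m/\ell$, apply the upper-tail Chernoff bound with $a=1+\delta$ chosen so that $m\in\Omega(\ell\log n)$ forces the exponent to be $-c\ln n$, invoke Remark \ref{rem:chernoffbound} for the $k$-wise independence, and finish with a union bound over the bins. You are in fact slightly more explicit than the paper about the need to reconcile the constant hidden in $k\in\Omega(m/\ell)$ with the requirement $k\geq\lceil\mu_H\delta\rceil$, which is a fair point but not a gap.
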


\begin{proof}
	Let $X_{v}$ be the number of balls that bin $v$ contains. We have $\mathbb{E}\big(X_{v}\big) \leq \tfrac{m}{\ell}$. By the premise, we have $m \geq \xi \ell \ln n$ for some constant $\xi$ and large enough $n$. Let $c>0$ be an arbitrary constant.  We choose \smash{$a \geq (1 \!+\! \frac{3c}{\xi})$}. Then a Chernoff bound (Lemma \ref{lem:chernoffbound}, combined with Remark \ref{rem:chernoffbound}) yields
	\[\mathbb{P}\Big(X_{v} \!>\! a \cdot \tfrac{m}{\ell} \Big) \leq \mathbb{P}\Big(X_{v} \!>\! (1 \!+\! \tfrac{3c}{\xi})\tfrac{m}{\ell} \Big) \leq \exp\Big(\!-\! \frac{3\xi c \ln n}{3\xi}\Big) =  \frac{1}{n^{c}}.\]
	By the union bound given in Lemma~\ref{lem:unionbound} the event \smash{$\bigcap_{v \in V} X_{v} \!\leq\! \gamma $} also takes place w.h.p.
\end{proof}

One application of the lemma for the balls into bins problem that we use in this work is given by following lemma.

\begin{lemma}
	\label{lem:receiveBound}
	Presume some algorithm operating on a graph with $n$ nodes takes at most $p(n)$ rounds for some polynomial $p$. Let $m \in \Omega(\log n)$. Presume that each round, every node sends at most $m$ messages via the global network to target nodes in $V$ picked uniformly at random and $k$-wise independently for some $k \in \Omega(m)$. Then we can choose constant $a >1$ such that every node receives at most $a \cdot m$ messages per round w.h.p.
\end{lemma}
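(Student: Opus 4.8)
The plan is to recognize Lemma~\ref{lem:receiveBound} as a direct, per-round application of the balls-into-bins bound in Lemma~\ref{lem:balls_in _bins}, followed by a union bound over the polynomially many rounds of the algorithm.

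First I would fix an arbitrary round $t \in [p(n)]$ and model the communication in that round as a balls-into-bins instance: the $n$ nodes serve as the bins, and every message sent in round $t$ is a ball, landing in the bin corresponding to its (uniformly random, $k$-wise independent) destination node. Since each of the $n$ nodes sends at most $m$ messages, there are at most $M := n m$ balls and exactly $\ell := n$ bins, so the expected load of any bin is at most $M/\ell = m$.

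Next I would verify the hypotheses of Lemma~\ref{lem:balls_in _bins} for this instance. That lemma requires $M \in \Omega(\ell \log n)$; since $m \in \Omega(\log n)$ we indeed have $M = nm \in \Omega(n \log n) = \Omega(\ell \log n)$. It also requires $k$-wise independence for some $k \in \Omega(M/\ell) = \Omega(m)$, which is exactly the independence supplied by the hypothesis. Applying Lemma~\ref{lem:balls_in _bins} then yields a constant $a > 1$ such that, w.h.p., no bin receives more than $a \cdot M/\ell = a \cdot m$ balls in round $t$; equivalently, no node receives more than $a \cdot m$ messages in round $t$. Finally, since each per-round statement holds with probability at least $1 - 1/n^c$ and there are only $p(n) \in \poly(n)$ rounds, I would invoke the union bound (Lemma~\ref{lem:unionbound}) to conclude that the bound $a \cdot m$ holds in every round simultaneously w.h.p. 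This step needs no independence across rounds, only that each round's failure probability is polynomially small.

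The part requiring the most care is the reduction bookkeeping rather than any heavy probability. One must check that the hypothesis "at most $m$ messages per node" (as opposed to exactly $m$) is compatible with the lemma, which it is, since Lemma~\ref{lem:balls_in _bins} already permits "at most $M$ balls," making any padding unnecessary. One must also ensure that it is the collective $k$-wise independence of the destinations across all senders that the balls-into-bins lemma consumes, and note the notational clash whereby the $m$ of Lemma~\ref{lem:balls_in _bins} denotes a ball count whereas here $m$ is a per-node message count; tracking the substitution $M = nm$ and $\ell = n$ resolves this. A minor point is selecting the exponent $c$ in the per-round guarantee large enough that the union bound over the $p(n)$ rounds still yields an arbitrarily small polynomial failure probability, which Lemma~\ref{lem:unionbound} handles automatically.
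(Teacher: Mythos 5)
Your proposal is correct and follows essentially the same route as the paper's own proof: per round, reduce to the balls-into-bins bound of Lemma~\ref{lem:balls_in _bins} with $nm$ balls and $n$ bins, then union bound over the $p(n)$ rounds via Lemma~\ref{lem:unionbound}. You merely spell out the hypothesis-checking ($nm \in \Omega(n\log n)$ and $k \in \Omega(m)$) more explicitly than the paper does.
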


\begin{proof}
	Let $X_{v,r}$ be the number of messages node $v$ receives in round $r$. For bounding the probability of the events $X_{v,r} \leq a \cdot m$ for all $v \in V$ we are dealing with the balls into bins problem with $m\cdot n$ balls and $n$ bins. The claim for a single round follows from applying Lemma \ref{lem:balls_in _bins}.
	By the union bound given in Lemma~\ref{lem:unionbound} the event \smash{$\bigcap_{ r \leq p(n)} \big(\forall v \in V\!:  X_{v,r} \!\leq\! am \big)$} also takes place w.h.p.
\end{proof}


\section{k-Wise Independent Hash Functions}
\label{apx:Pseudorandom}

We will use families of $k$-wise independent hash functions, which are defined as follows.

\begin{definition}
	\label{def:hashfunctions}	
	For finite sets $A,B$, let $\mathcal H$ be a family consisting of hash functions $h : A \to B$. Then $\mathcal H$ is called $k$-wise independent if for a random function $h \in \mathcal H$ and for any $k$ distinct keys $a_1, \ldots, a_k \in A$ we have that $h(a_1), \ldots, h(a_k) \in B$ are independent and uniformly distributed random variables in $B$.
\end{definition}

From literature we know that such a family of hash functions exists in the following form (c.f., \cite{Vadhan2012}).

\begin{lemma}
	\label{lem:hashfunctions}
	For $A := \{0,1\}^a$ and $B := \{0,1\}^b$, there is a family of $k$-wise independent hash functions $\mathcal H := \{h : A \to B\}$ such that selecting a random function from $\mathcal H$ requires $k \!\cdot\! \max(a,b)$ random bits and computing $h(x)$ for any $x \in A$ can be done in $poly(a,b,k)$ time.
\end{lemma}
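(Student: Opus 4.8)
The plan is to use the textbook polynomial-evaluation construction over a finite field of characteristic two. Set $m := \max(a,b)$ and work in the field $\mathbb{F}_{2^m}$, whose elements I identify with bit strings in $\{0,1\}^m$ via a fixed basis. Since $a \leq m$, fix an injection $\iota : A = \{0,1\}^a \hookrightarrow \mathbb{F}_{2^m}$ (e.g.\ zero-padding a string to length $m$). A random function $h \in \mathcal H$ is specified by drawing $k$ coefficients $c_0, \dots, c_{k-1} \in \mathbb{F}_{2^m}$ independently and uniformly at random, which defines the degree-$(k-1)$ polynomial $p(y) := \sum_{i=0}^{k-1} c_i y^i$. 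I then set $h(x) := \tau\big(p(\iota(x))\big)$, where $\tau : \mathbb{F}_{2^m} \to \{0,1\}^b$ truncates a field element to its first $b$ bits (valid since $b \leq m$).

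Next I would establish $k$-wise independence. Fix any $k$ distinct keys $a_1, \dots, a_k \in A$; their images $y_j := \iota(a_j)$ are $k$ distinct points of $\mathbb{F}_{2^m}$. The vector $\big(p(y_1), \dots, p(y_k)\big)$ equals $M \cdot (c_0, \dots, c_{k-1})^{\top}$, where $M$ is the $k \times k$ Vandermonde matrix with rows $(1, y_j, y_j^2, \dots, y_j^{k-1})$. Since the $y_j$ are distinct, $\det M = \prod_{i<j}(y_j - y_i) \neq 0$, so $M$ is invertible over $\mathbb{F}_{2^m}$; hence the linear map from the uniformly random coefficient vector to $\big(p(y_1), \dots, p(y_k)\big)$ is a bijection, and the evaluations are independent and uniform over $\mathbb{F}_{2^m}$. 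Applying the deterministic projection $\tau$ to each independent uniform field element keeps the outputs independent, and $\tau$ sends the uniform distribution on $\{0,1\}^m$ to the uniform distribution on $\{0,1\}^b$. This yields exactly the property of Definition \ref{def:hashfunctions}.

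Finally I would verify the two quantitative claims. The seed consists of the $k$ field elements $c_0, \dots, c_{k-1}$, each an $m$-bit string, for a total of $k \cdot m = k \cdot \max(a,b)$ random bits. For the running time, one first fixes a representation of $\mathbb{F}_{2^m}$ by computing a degree-$m$ irreducible polynomial over $\mathbb{F}_2$, which can be done in $\poly(m)$ time; each field addition and multiplication then costs $\poly(m)$ time. Evaluating $p$ at $\iota(x)$ by Horner's rule uses $O(k)$ field operations, and the maps $\iota$ and $\tau$ are trivial, so computing $h(x)$ takes $\poly(a,b,k)$ time overall.

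The main obstacle is a bookkeeping one rather than a deep one: one must treat the case $a \neq b$ cleanly, checking that the injection $\iota$ into the larger field and the truncation $\tau$ onto the smaller output both preserve $k$-wise independence and uniformity. The conceptual core — that the evaluations of a random low-degree polynomial at distinct points are fully mutually independent — reduces entirely to the invertibility of the Vandermonde matrix over a field, which is where I would concentrate the argument.
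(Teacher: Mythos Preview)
Your proof is correct and is exactly the standard polynomial-evaluation construction over $\mathbb{F}_{2^{\max(a,b)}}$. The paper does not actually prove this lemma; it states it as a fact from the literature with a reference to \cite{Vadhan2012}, and your argument is precisely the textbook construction that reference would supply.
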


%

\begin{remark}
	We can use a random member $h \in \mathcal H$ of a family as described in Definition \ref{def:hashfunctions} to limit the number of messages any node receives in a given round. If all nodes send at most $\bigO(\log n)$ messages to targets that are determined using that hash function $h$ with a distinct key for each message, then any node receives at most $\bigO(\log n)$ messages per round w.h.p. The details are given in Lemma \ref{lem:receiveBound}.
\end{remark}

\section{Skeleton Graphs in Hybrid Networks}
\label{sec:skeleton-graphs}

Skeleton graphs have first been deployed by \cite{Ullman1991} and are a very useful concept in the context of the \hybrid model see \cite{Augustine2020a, Kuhn2020}. The basic idea of skeleton graphs is to sample a set of nodes of the local graph $G$ with some probability $\frac{1}{x}$ and then compute virtual edges among pairs of sampled nodes that are connected by a path of at most $h \in \tilO(x)$ hops, where the weight of that virtual edge corresponds to the length of that path. The formal definition is given as follows.

\begin{definition}
	\label{def:skeleton_graph}
	A skeleton graph $\calS = (V_\calS, E_\calS)$ of $G$, is obtained by sampling each node of $G$ to $V_\calS$ with prob.\ at least $\frac{1}{x}$. The edges of $\calS$ are $E_\calS \!=\! \{ \{u,v\} \!\mid\! u,v\!\in\!V_\calS, \text{hop}(u,v) \!\leq\! h\}$ with weights $d_h(u,v)$ for $\{u,v\} \in E_\calS$, for some appropriate $h\in \tilO(x)$.
\end{definition}

In \cite{Augustine2019} shows that $\calS$ gives a good approximation of the topology of the graph, in particular, with high probability, the distance between sampled nodes in the resulting skeleton graph equals the actual distance in the local graph.

\begin{lemma}[cf.\ \cite{Augustine2019}]
	\label{lem:skeleton-graph}
	A skeleton graph $\calS = (V_\calS, E_\calS)$ as given in Definition \ref{def:skeleton_graph} is connected and for any $u,v\in \calS$ it is $d_\calS(u,v) = d(u,v)$ w.h.p. for some appropriately chosen $h\in \tilO(x)$. Furthermore, the skeleton graph can be constructed in $h\in \tilO(x)$ rounds in the \LOCAL (and thus \HYBRID) model.
\end{lemma}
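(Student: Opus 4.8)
The final statement to prove is Lemma \ref{lem:skeleton-graph}, concerning skeleton graphs.

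\textbf{Proof proposal.} The plan is to establish three claims about the skeleton graph $\calS$: connectivity, distance preservation ($d_\calS(u,v) = d(u,v)$ w.h.p.), and the $\tilO(x)$ construction time. The key observation driving all three is that if we fix a shortest path $P$ between any two nodes $u,v \in V$, then the sampled nodes $V_\calS$ are spread along $P$ densely enough that no two consecutive sampled nodes on $P$ are more than $h \in \tilO(x)$ hops apart, w.h.p. First I would make this precise: consider any stretch of $h$ consecutive nodes along a fixed shortest path. Since each node is sampled independently with probability at least $\tfrac{1}{x}$, the probability that none of these $h$ nodes is sampled is at most $(1-\tfrac{1}{x})^h \leq \exp(-h/x)$. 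Choosing $h = \Theta(x \log n)$ (which is indeed in $\tilO(x)$) makes this at most $n^{-c}$ for any desired constant $c$.

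Next I would leverage this sampling density. Fix $u,v \in V_\calS$ and a shortest $u$-$v$ path $P$ in $G$. Decompose $P$ into maximal segments between consecutive sampled nodes lying on $P$ (including $u$ and $v$ as endpoints, which are themselves sampled). By the density claim above, each such segment has at most $h$ hops, so each consecutive pair of sampled nodes on $P$ is joined by an edge in $E_\calS$ whose weight is $d_h(\cdot,\cdot)$, equal to the true distance along that segment since the segment is itself a shortest path of at most $h$ hops. Summing the weights of these virtual edges recovers exactly $w(P) = d(u,v)$, which shows $d_\calS(u,v) \leq d(u,v)$. The reverse inequality $d_\calS(u,v) \geq d(u,v)$ is immediate because every edge weight in $E_\calS$ equals a genuine hop-limited distance in $G$, so any path in $\calS$ corresponds to a walk of the same total length in $G$, and distances in $G$ lower-bound walk lengths. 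Connectivity then follows as a corollary: the same segment decomposition exhibits an actual path in $\calS$ between any two sampled nodes. To apply this simultaneously to all pairs $u,v \in V_\calS$ (there are at most $\binom{n}{2}$ of them, a polynomial count), I would invoke the union bound (Lemma \ref{lem:unionbound}) after adjusting the constant $c$ in the sampling analysis.

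Finally, for the construction cost, each sampled node explores its $h$-hop neighborhood in the local network to discover all other sampled nodes within $h$ hops and the corresponding hop-limited distances $d_h$; since $h \in \tilO(x)$ and the \LOCAL model imposes no bandwidth restriction, this takes $\tilO(x)$ rounds, and the \HYBRID model subsumes \LOCAL (Def.\ \ref{def:hybrid}). The main obstacle I anticipate is making the sampling-density argument fully rigorous while keeping the events independent enough to union-bound cleanly: one must be careful that the event ``some length-$h$ window along $P$ contains no sampled node'' is analyzed over all windows and all node pairs, and that fixing the shortest path $P$ in advance (rather than letting it depend on the sampling outcome) is legitimate. A clean way to handle this is to note that for the purpose of the upper bound it suffices to fix one shortest path per pair before sampling, so independence of the sampling indicators is preserved, and then the window-free probability bound applies directly.
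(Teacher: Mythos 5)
Your proof is correct and follows exactly the standard argument that the paper defers to by citing \cite{Augustine2019}: fix one shortest path per pair before sampling, show via $(1-\tfrac{1}{x})^{h}\leq e^{-h/x}$ with $h\in\Theta(x\log n)$ that every length-$h$ window contains a sampled node w.h.p., decompose the path into segments between consecutive sampled nodes (each a shortest subpath, so its $E_\calS$ edge weight equals its true length), and union-bound over the polynomially many pairs and windows. The paper itself gives no proof of this lemma, so there is nothing to contrast with; your handling of the two usual pitfalls (fixing paths independently of the sampling, and the reverse inequality $d_\calS(u,v)\geq d(u,v)$ via the hop-limited weights) is exactly what is needed.
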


\section{The Node Communication Problem}
\label{sec:node_comm}

Our universal lower bounds is based on a reduction of an intermediate problem called the \textit{node communication problem}, which describes the complexity of communicating information between distinct node sets in the \hybrid model.

In the node communication problem we have two sets of nodes $A$ and $B$, where nodes in $A$ ``collectively know'' the state of some random variable $X$ and need to communicate it to $B$. 
The maximum amount of information that can be conveyed from $A$ to $B$ via global communication is fundamentally restricted by the nodes within $h$ hops that each node set can rely on for global communication.

\begin{definition}[Knowledge of Random Variables]
	\label{def:node_knowledge}
	Let $V$ be the set of nodes in a distributed network. Let $A \subseteq V$ and let $S_A$ be the state (including inputs) of all nodes in $A$ (we interpret $S_A$ as a random variable). Then the nodes in $A$ \textit{collectively know} the state of a random variable $X$ if $H(X|S_A) = 0$ (see Definition \ref{def:entropy}), i.e., there is no new information in $X$ provided that $S_A$ is already known.		
	Similarly, we say that $X$ is \textit{unknown} to $B \subseteq V$, if $H(X|S_B) = H(X)$, meaning that all information in $X$ is new even if $S_B$ is known. Equivalently, we can define this as $S_B$ and $X$ being independent. We can extend these definitions to communication parties Alice and Bob with states $S_{\text{Alice}}$ and $S_{\text{Bob}}$ where Alice knows $X$ if $H(X|S_{\text{Alice}}) = 0$ and $X$ is unknown to Bob if $H(X|S_{\text{Bob}}) =  H(X)$.
\end{definition}

We can now set up the node communication problem.

\begin{definition}[Node Communication Problem]
	\label{def:node_comm_problem}
	Let $G=(V,E)$ be some graph. Let $A,B \subset V$ be disjoint sets of nodes and $h := hop(A,B)$. Furthermore, let $X$ be a random variable whose state is collectively known by the nodes $A$ but unknown to any set of nodes disjoint from $A$. An algorithm $\mathcal A$ solves the \emph{node communication problem} if the nodes in $B$ collectively know the state of $X$ after $\mathcal A$ terminates. We say $\calA$ has success probability $p$ if $\calA$ solves the problem with probability at least $p$ for \emph{any} state $X$ can take.
\end{definition}

The Shannon entropy of a random variable $X$ can be thought of as the average information conveyed by a realization of $X$ and is defined as follows.

\begin{definition}[Entropy, c.f., \cite{Shannon1948}]
	\label{def:entropy}
	The Shannon entropy of a random variable $X\!:\! \Omega \!\to\! S$ is defined as $H(X) := - \!\sum_{x \in S} \mathbb{P}(X \!=\! x) \log \big(\mathbb{P}(X \!=\! x) \big)$. For two random variables $X,Y$ the \textit{joint entropy} $H(X,Y)$ is defined as the entropy of $(X,Y)$. The \textit{conditional entropy} is $H(X|Y) = H(X,Y) - H(Y)$. 
\end{definition}

\end{document}